\newcommand{\textcolor}[1]{}
\newtheorem{theorem}{Theorem}[section]   
\newtheorem{definition}[theorem]{Definition}
\newtheorem{proposition}[theorem]{Proposition}
\newtheorem{corollary}[theorem]{Corollary}
\newtheorem{example}{Example}
\newtheorem{remark}[theorem]{Remark}
\def\IC{{\bf C}}
\def\cR{{\cal R}}
\def\la{{\langle}}
\def\ra{{\rangle}}
\def\diag{{\rm diag}\,} 
\def\tr{{\rm tr}\,} 
\def\trn{{\rm tr}}
\newcommand{\bra}[1]{\mbox{$\left\langle #1 \right|$}}
\newcommand{\ket}[1]{\mbox{$\left| #1 \right\rangle$}}
\newcommand{\braket}[2]{\mbox{$\left\langle #1 | #2 \right\rangle$}}
\begin{document} 
\openup 1 \jot
\title
{Conditions for degradability of tripartite quantum states}

\author{Chi-Hang Fred Fung \\
  \multicolumn{1}{p{.7\textwidth}}{\centering\emph{%
  Department of Physics and Center of Theoretical and Computational Physics, University of Hong Kong, Pokfulam Road, Hong Kong}}
  \\
  \\
  Chi-Kwong Li \\
  \multicolumn{1}{p{.7\textwidth}}{\centering\emph{%
  Department of Mathematics, College of William \& Mary, Williamsburg, Virginia 23187-8795, USA}}
  \\
  \\
  Nung-Sing Sze \\
  \multicolumn{1}{p{.7\textwidth}}{\centering\emph{%
  Department of Applied Mathematics, The Hong Kong Polytechnic University, Hung Hom, Hong Kong}}
  \\
  \\
  H.~F. Chau \\
  \multicolumn{1}{p{.7\textwidth}}{\centering\emph{%
  Department of Physics and Center of Theoretical and Computational Physics, University of Hong Kong, Pokfulam Road, Hong Kong}}
}

\date{}

\maketitle

\begin{abstract}
Alice, Bob, and Eve share a pure quantum state.
{\textcolor{mycolor2}{
We introduce the notion of state degradability by asking
whether the joint density of Alice and Eve can be transformed to the joint density of Alice and Bob by processing Eve's part through a quantum channel, in order words, degrading Eve.
}}
We prove necessary and sufficient conditions for state degradability and provide an efficient method to quickly rule out degradability for a given state.
The problem of determining degradability of states is different from that of quantum channels, although the notion is similar.
{\textcolor{mycolor2}{
One application of state degradability is that it can be used to test channel degradability.
In particular, 
the degradability of the output state of a channel obtained from the maximally entangled input state gives information about the degradability of the channel.
}}
\end{abstract}

\section{Introduction}

In quantum information processing, 
information is often encoded in quantum states which are transformed under quantum computation in order to carry out tasks such as the generation of secret keys~\cite{Bennett1984,Ekert1991}, encoding of error correcting codes~\cite{Shor:1995:QECC,Gottesman:1996:stabilizer,Calderbank:1997:stabilizer,Calderbank:1998:stabilizer}, and secret sharing~\cite{Hillery:1999:QSS,Karlsson:1999:QSS}.
The general quantum state transformation problem concerns whether a state can be transformed via a quantum process to another state, possibly with some constraint on the quantum process.
In as early as 1980's, 
Alberti and Uhlmann~\cite{Alberti1980163} studied the conditions for transforming two qubit mixed states.
Subsequently, conditions for the transformations between two sets of pure states without any restriction on the number of states were found~\cite{Uhlmann1985,Chefles200014,PhysRevA.65.052314,Chefles_Jozsa_Winter_2004}.
The transformation of entangled states under the condition that the two parties perform local operations has also been studied for a single bipartite state~\cite{Bennett1996,Lo:2001:concentration,PhysRevLett.83.436,PhysRevLett.83.1455,He2008,Gheorghiu2008} and for multiple bipartite states~\cite{Chau:2012:enttrans}.
Extension to transformations for more than two parties has also been considered~\cite{Gour:2011:multipartite}.

{\textcolor{mycolor2}{
In this paper, 
we introduce the notion of state degradability, which is based on
a transformation problem where we ask whether a subsystem can be degraded to another subsystem.
}}
More precisely,
consider a quantum state in $|\psi\ra \in H_A \otimes H_B$ shared by Alice and Bob.
Assume that this state is processed by Eve and becomes an entangled state
$|\tilde \psi \ra \in H_A \otimes H_B \otimes H_E$.
(In the context of quantum key distribution (QKD), such processing corresponds to eavesdropping by Eve or the noisy effect of the channel.)
We are interested in
constructing a quantum process $T: H_A \otimes H_E \rightarrow H_A \otimes H_B$ of 
the form
\begin{equation}
\label{TX}
T(X) = \sum_{j=1}^r (I_A \otimes F_j)X(I_A\otimes F_j)^{*} \
\hbox{ with } \ F_j: H_E\rightarrow H_B \ \hbox{ satisfying } \  
\sum_{j=1}^r F_j^{*}F_j = I_E
\end{equation}
such that 
$T(\rho_{AE}) = \rho_{AB}$ for
$$\rho_{AE} = \tr_B(\rho) \in {\mathcal B}(H_A \otimes H_E), \
\rho_{AB} = \tr_E(\rho) \in {\mathcal B}(H_A\otimes H_B) \ 
\hbox { with } \rho = |\tilde \psi\ra \la \tilde \psi |, $$
where ${\mathcal B}(H)$ is the set of bounded, positive-semidefinite operators acting on $H$, 
and $r$ is the number of Kraus operators of the quantum channel $T$ which can be arbitrary.
If such a map $T$ exists, we call the state $|\tilde \psi \ra$ $E \rightarrow B$ degradable.
Similarly, if there exists a quantum channel $T': H_A \otimes H_B \rightarrow H_A \otimes H_E$ 
such that $T'(\rho_{AB}) = \rho_{AE}$, we call the state
$B \rightarrow E$ degradable.
A state may be $E \rightarrow B$ and $B \rightarrow E$ degradable.

It is interesting to know whether a state is degradable.
For example,
in QKD if the joint state between Alice and Bob can be shown to be the same as the joint state between Alice and Eve via some processing of Eve's part, then no secret key can be generated with one-way postprocessing~\cite{Bruss:1998:cloning,Nowakowski:2009:testforcapacity,Moroder:2006:upperbound}.
Also, state degradability is related to asymmetric quantum cloning~\cite{Cerf:2000:cloning,Fiurasek:2005:cloning,Iblisdir:2005:cloning}, in which
the two output subsystems are not necessarily copies of each other, but one subsystem can be transformed to be a clone of the other.
If a given state is degradable, it means it could have been produced by asymmetric cloning of some other state.

Degradability has been studied in the context of quantum channels~\cite{Devetak:2005:degradable,Cubitt:2008:degradable}.
Let us consider a channel in system 
$B$
which is described as a unitary transformation with ancillary system 
$E$
prepared in a standard state:
{\textcolor{mycolor}{
$$
\Phi_B(\rho_B) = \tr_E [ U_{BE} ( \rho_B \otimes \ket{0}_E\bra{0} ) U_{BE}^* ] .
$$
Note that this system $A$ does not appear in this definition of degradable channel.
}}%
This induces the complementary channel
{\textcolor{mycolor}{
$$
\Phi_E(\rho_B) = \tr_B [ U_{BE} ( \rho_B \otimes \ket{0}_E\bra{0} ) U_{BE}^* ].
$$
}}%
The channel 
{\textcolor{mycolor}{$\Phi_B$}}
is called degradable when it may be degraded to 
{\textcolor{mycolor}{$\Phi_E$,}}
that is, there exists a quantum channel $\hat{T}: H_B \rightarrow H_E$ such that 
{\textcolor{mycolor}{$\hat{T} \circ \Phi_B = \Phi_E$.}}
Similarly, 
{\textcolor{mycolor}{$\Phi_B$}}
is called anti-degradable when there exists a quantum channel $\hat{T}: H_E \rightarrow H_B$ such that 
{\textcolor{mycolor}{$\hat{T} \circ \Phi_E = \Phi_B$.}}

It is clear that a degradable (anti-degradable) channel 
always output a state that is 
{\textcolor{mycolor}{$B \rightarrow E$ ($E \rightarrow B$)}}
degradable for any input.
On the other hand, there are channels that output a degradable state for some input and a non-degradable state for another input.
For example, consider this channel:
\begin{eqnarray*}
\ket{0}_B \ket{0}_E &\rightarrow& \ket{00}_{BE}
\\
\ket{1}_B \ket{0}_E &\rightarrow& \ket{11}_{BE}
\\
\ket{2}_B \ket{0}_E &\rightarrow& \ket{10}_{BE} .
\end{eqnarray*}
For the input $\ket{00}_{AB}+\ket{11}_{AB}$, we get the output
$\ket{000}_{ABE}+\ket{111}_{ABE}$
which is $B \rightarrow E$ and $E \rightarrow B$ degradable.
But for the input $\ket{00}_{AB}+\ket{11}_{AB}+\ket{22}_{AB}$, we get
$\ket{000}_{ABE}+\ket{111}_{ABE}+\ket{210}_{ABE}$.
For this state, since to degrade $E$ to $B$, $\ket{0}_E$ has to change to $\ket{0}_B$ and $\ket{1}_B$, which is not possible without knowing $A$.
Thus, the output state is not $E \rightarrow B$ degradable.
With a similar argument, it is also not $B \rightarrow E$ degradable.
This shows that a channel may output both degradable and non-degradable states.
Therefore, it is a new problem
to study degradable states without reference to whether the channel generating that state is degradable or not.

{\textcolor{mycolor2}{
As one application of state degradability, we prove in Sec.~\ref{sec-connection} that state degradability can be used to test channel degradability.
In particular, 
the degradability of the output state of a channel obtained from the maximally entangled input state gives information about the degradability of the channel.
We show that if the channel output state of the maximally entangled input state is degradable, then the corresponding output state of the channel is degradable for any input state.
Also, if the channel output state of the maximally entangled input state is not degradable, then the corresponding output state is not degradable for any input state in a special class.
}}

{
\textcolor{mycolor}{
In some applications, 
the issue of state degradability arises naturally in that
Alice, Bob, and Eve are initially given a tripartite state, without regard to the details of how it is given.
This may occur due to, for example, an entanglement source generating a tripartite state, or an unknown quantum channel processing one part of a bipartite input.
As a specific example, in entanglement distillation~\cite{Bennett:1996:EDP}, the problem is often cast as that given a noisy state in $AB$, which is purified to a tripartite state in $ABE$, the goal is to transform it (through, e.g., local operations and classical communications) to a maximally entangled state.
This can be viewed as a problem of a given initial state.
A similar situation occurs in QKD~\cite{Bennett1984,Ekert1991}.
After the quantum state transmission step, Alice and Bob are given bipartite states which they would like to transform to a secret key.
They first learn about their states by error testing and then choose the appropriate procedures to correct bit errors and amplify privacy.
This is also a problem centered on a given state.
And as mentioned before, if the state is degradable, no secret key can be generated with one-way postprocessing~\cite{Bruss:1998:cloning,Nowakowski:2009:testforcapacity,Moroder:2006:upperbound}; and thus no maximal entanglement can be distilled.
}}

We formulate the mathematical problem as follows.

\smallskip\noindent
{\bf State-Degradability Problem} Let  $x \in \IC^n \otimes \IC^p \otimes \IC^q$. 
Let $X_i = \tr_i(xx^*)$ with $i = 1,2,3$, 
be the partial traces of $xx^*$ in the three subsystems:
$\IC^p\otimes \IC^q$, $\IC^n \otimes \IC^q$, and $\IC^n\otimes \IC^p$.
Determine conditions on $x$ (or a class of $x$) such that there is $T$ of the form
(\ref{TX}) such that $T(X_2) = X_3$.
{\textcolor{mycolor}{
Here, we adopt the mathematical notation: $X_1=\rho_{BE}$, $X_2=\rho_{AE}$, and $X_3=\rho_{AB}$.
It turns out that this notation allows our mathematical results in the following sections to be concisely described.
We will however switch back to the physicist notation of $A,B,E$ when we discuss examples of physical relevance.
}}%
{\textcolor{mycolor2}{
Also, we use the notations $M_p$ to denote the set of $p \times p$ matrices and $M_{p,q}$ the set of $p \times q$ matrices.
In this paper, we consider systems of finite dimensions.
}}

The problem of determining whether a state is degradable is similar to the problem of finding a symmetric extension of a state~\cite{Nowakowski:2009:testforcapacity} in that both problems are characterized by the generation of Alice and Bob's state by processing
Eve's part of Alice and Eve's state.
However, in the latter problem, we are given Alice and Bob's state and we seek an extension that adds Eve to the overall state, while in our state degradability problem, a tripartite state is given initially.

We first give low dimension examples in Sec.~\ref{sec-low-dim-examples} which helps to understand the nature of the problem.
Then, in Sec.~\ref{sec-general-result}, we prove necessary and sufficient conditions for our state-degradability problem as stated above.
{\textcolor{mycolor}{
We discuss the physical interpretation of the transformability conditions in Sec.~\ref{sec-physical-interpretation}.
Sec.~\ref{sec-connection} 
discusses state degradability when a quantum channel is used to generate the overall state.
}}
The transformability conditions might be difficult to verify in general, and thus we propose an easily computable method that can quickly rule out degradability of a given state in Sec.~\ref{sec-rule-out}.
We discuss some additional problems and observations in Sec.~\ref{sec-additional-problems}.
Finally, we conclude in Sec.~\ref{sec-conclusion}.

\section{Low dimension examples}
\label{sec-low-dim-examples}
  
\begin{example} 
\label{example-low-dim-1}
\rm
Suppose $x = (x_{1},  \dots, x_8)^t \in \IC^8 \equiv \otimes^3(\IC^2)$.
Let  
$xx^* \in M_8$ 
and $\tr_1, \tr_2, \tr_3$ be the partial trace on the three
systems. Then
$$X_1 = \tr_1(xx^*) = (x_1 x_2 x_3 x_4)^t(\bar x_1 \bar x_2 \bar x_3 \bar x_4) +
(x_5 x_6 x_7 x_8)^t(\bar x_5 \bar x_6 \bar x_7 \bar x_8),$$
$$X_2 = \tr_2(xx^*) = (x_1 x_2 x_5 x_6)^t(\bar x_1 \bar x_2 \bar x_5 \bar x_6) +
(x_3 x_4 x_7 x_8)^t(\bar x_3 \bar x_4 \bar x_7 \bar x_8),$$
$$X_3 = \tr_3(xx^*) = (x_1 x_3 x_5 x_7)^t(\bar x_1 \bar x_3 \bar x_5 \bar x_7) +
(x_2 x_4 x_6 x_8)^t(\bar x_2 \bar x_4 \bar x_6 \bar x_8),$$
where for ease of notations, we omitted the commas in the vectors such as $(x_1, x_2, x_3, x_4)$.
We would like to know whether there is $T: M_4 \rightarrow M_4$ of the form
$$T(X) = \sum_j (I_2 \otimes F_j)X(I_2 \otimes F_j)^*$$
with $\sum_j F_j^* F_j = I_2$ such that 
$T(X_2) = X_3.$
\end{example}

\medskip\noindent
\begin{example} 
\label{example-low-dim-2}
\rm Let $x=(a,0,b,0,0,a,0,-b)^t$ with $2(a^2 + b^2)= 1$.
This state is non-trivial since it is not symmetric in 2 and 3.
Then 
$$X_2 = 
\pmatrix{a^2+b^2 & 0 & 0 & a^2-b^2 \cr 0 & 0 & 0 & 0 \cr 
0 & 0 & 0 & 0 \cr a^2-b^2 &0 & 0 & a^2+b^2\cr}
\qquad \hbox{ and } \qquad 
X_3 =
\pmatrix{a^2 & ab & 0 & 0 \cr ab & b^2 & 0 & 0 \cr 
0 & 0 & a^2 & -ab \cr 0 & 0 & -ab & b^2\cr}.$$
Let 
$$F_1 = \pmatrix{a&a\cr b&-b\cr} \qquad \hbox{ and } \qquad 
F_2 = \pmatrix{a&-a\cr b& b\cr}.$$
Then $F_1^*F_1 + F_2^*F_2 = I_2$, and $T(X_2) = X_3$ if
$$T(X) = (I_2 \otimes F_1)X_2(I_2 \otimes F_1)^* + (I_2 \otimes F_2)X_2(I_2 \otimes F_2)^*.$$
\end{example}

\medskip\noindent
\begin{proposition} 
\label{proposition-low-dim-3}
{\textcolor{mycolor2}{
In Examples~\ref{example-low-dim-1} and \ref{example-low-dim-2},
}}%
the desired map 
exists if and only if 
there are $F_1, \dots, F_r$ with $\sum_{j=1}^r F_j^*F_j = I_2$ 
such that the map $L: M_2 \rightarrow M_2$ defined by
$$L(X) = \sum_{j=1}^r F_j X F_j^*$$ 
satisfies $L(R_iR_j^*) = S_iS_j^*$ for $1 \le i, j \le 2$, where
$$R_1 = \pmatrix{x_1 & x_3 \cr x_2 & x_4 \cr},  \ R_2 = \pmatrix{ x_5 & x_7 \cr x_6 & x_8 \cr}, \
S_1 = \pmatrix{x_1 & x_2 \cr x_3 & x_4 \cr},  \ S_2 = \pmatrix{ x_5 & x_6 \cr x_7 & x_8 \cr}.$$
\end{proposition}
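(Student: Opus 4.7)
The plan is to recognize that both $X_2$ and $X_3$ have a natural $2\times 2$ block structure indexed by the Alice subsystem, and to show that the action of $T=I_A\otimes L$ preserves this block structure, so that the condition $T(X_2)=X_3$ reduces block-by-block to the condition $L(R_iR_j^*)=S_iS_j^*$.

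First, I would decompose $x\in\mathbb{C}^2\otimes\mathbb{C}^2\otimes\mathbb{C}^2$ along the Alice subsystem by writing $x=e_1\otimes\phi_1+e_2\otimes\phi_2$ where $\phi_1=(x_1,x_2,x_3,x_4)^t$ and $\phi_2=(x_5,x_6,x_7,x_8)^t$ are vectors in $\mathbb{C}^2\otimes\mathbb{C}^2\cong M_2$. Under the standard reshaping (rows indexed by Bob, columns by Eve), $\phi_i$ is identified precisely with the matrix $S_i$ in the statement, while under the reshaping with rows indexed by Eve and columns by Bob, $\phi_i$ is identified with $R_i=S_i^t$. This is just unpacking the ordering of $x_1,\dots,x_8$.

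Next I would compute the partial traces block-by-block. Using the identification above, a direct calculation of $\mathrm{tr}_E(|\phi_i\rangle\langle\phi_j|)$ and $\mathrm{tr}_B(|\phi_i\rangle\langle\phi_j|)$ in the $M_2\otimes M_2$ block form gives
\begin{equation*}
X_3 \;=\; \sum_{i,j=1}^{2} |i\rangle\langle j|_A \otimes S_iS_j^{*},
\qquad
X_2 \;=\; \sum_{i,j=1}^{2} |i\rangle\langle j|_A \otimes R_iR_j^{*}.
\end{equation*}
This is the key structural identity; once it is in place, everything reduces to blockwise bookkeeping.

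Applying $T(X)=\sum_k (I_A\otimes F_k)X(I_A\otimes F_k)^{*}$ to $X_2$, the $I_A$ factor passes through the block decomposition, yielding
\begin{equation*}
T(X_2) \;=\; \sum_{i,j=1}^{2} |i\rangle\langle j|_A \otimes \Bigl(\sum_k F_k R_iR_j^{*} F_k^{*}\Bigr) \;=\; \sum_{i,j=1}^{2} |i\rangle\langle j|_A \otimes L(R_iR_j^{*}).
\end{equation*}
Since the matrices $|i\rangle\langle j|_A$ form a basis of $M_2$, comparing $T(X_2)$ with $X_3$ block-by-block shows that $T(X_2)=X_3$ is equivalent to $L(R_iR_j^{*})=S_iS_j^{*}$ for $1\le i,j\le 2$. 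The trace-preservation constraints $\sum_k (I_A\otimes F_k)^{*}(I_A\otimes F_k)=I$ and $\sum_k F_k^{*}F_k=I_2$ coincide, so the two degradability problems (for $T$ and for $L$) use the same family of Kraus operators $F_k$, giving the equivalence in both directions.

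The only mildly subtle point is the reshaping convention that forces $S_i$ (not $R_i$) to appear in the $AB$ marginal and $R_i=S_i^t$ in the $AE$ marginal; once this is pinned down carefully, there is no real obstacle—the proof is essentially a bookkeeping exercise exploiting that $T$ acts as the identity on the $A$-factor.
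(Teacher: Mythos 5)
Your proof is correct and follows essentially the same route as the paper: the authors establish this proposition as the special case $n=p=q=2$ of the equivalence (a)$\Leftrightarrow$(b) in Theorem~\ref{thm-main1}, whose proof is exactly your block identity $X_2=(R_uR_v^*)_{u,v}$, $X_3=(S_uS_v^*)_{u,v}$ followed by the observation that $I_A\otimes L$ acts blockwise. Your careful tracking of the reshaping conventions (why $S_i$ appears in the $AB$ marginal and $R_i=S_i^t$ in the $AE$ marginal) is the ``direct checking'' the paper leaves implicit.
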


\medskip\noindent
{\bf Remark} Alternatively, we can check whether there is a TPCP map $T$ sending 
$R_1R_1^*, R_2R_2^*, (R_1+R_2)(R_1+R_2)^*, (R_1+iR_2)(R_1+iR_2)^*$ to
$S_1S_1^*, S_2S_2^*, (S_1+S_2)(S_1+S_2)^*, (S_1+iS_2)(S_1+iS_2)^*$.
This can be checked readily.

\medskip\noindent
\begin{example}
\label{example-low-dim-3}
\rm
Suppose $x = (x_1, \dots, x_{16}) \in \IC^2\otimes \IC^2 \otimes \IC^4$.
Let  
{\textcolor{mycolor}{
$xx^* \in M_{16}$
}}%
and $\tr_1, \tr_2, \tr_3$ be the partial trace on the three
systems. Then
$$X_1 = \tr_1(xx^*) = (x_1 \cdots x_8)^t(\bar x_1 \cdots \bar x_8) +
(x_9 \cdots  x_{16})^t(\bar x_9 \cdots \bar x_{16}),$$ 
$$X_2 = \tr_2(xx^*) =  u_1u_1^* + u_2u_2^* = [u_1 u_2][u_1 u_2]^*,$$
with
$$u_1 = (x_1 x_2 x_3 x_4 x_9 x_{10} x_{11} x_{12})^t, 
\ u_2 = (x_5 x_6 x_7 x_8 x_{13} x_{14} x_{15} x_{16})^t,$$
and 
$$X_3 = \tr_3(xx^*) = v_1v_1^* + \cdots + v_4v_4^* = [v_1 \cdots v_4][v_1 \cdots v_4]^*$$
$$v_1 = (x_1 x_5 x_9 x_{13})^t, \ 
v_2 = (x_2 x_6 x_{10} x_{14})^t, \
v_3 = (x_3 x_7 x_{11} x_{15})^t, \
v_4 = (x_4 x_8 x_{12} x_{16})^t.$$
We would like to know whether there is $T: M_8 \rightarrow M_4$ of the form
$$T(X) = \sum_j (I_2 \otimes F_j)X(I_2 \otimes F_j)^*$$
with $\sum_j F_j^*F_j = I_4$ such that 
$T(X_2) = X_3.$
\end{example}

\begin{proposition} 
\label{proposition-low-dim-4}
{\textcolor{mycolor2}{
Using the notation of Example~\ref{example-low-dim-3}, let
}}%
$$R_1 = \pmatrix{x_1 & x_5 \cr x_2 & x_6 \cr x_3 & x_7 \cr x_4 & x_8 \cr}, \quad
R_2 = \pmatrix{x_9 & x_{13} \cr x_{10} & x_{14} \cr x_{11} & x_{15} \cr x_{12} & x_{16} \cr},$$
$$S_1 = \pmatrix{x_1 & x_2 & x_3 & x_4 \cr x_5 & x_6 & x_7 & x_8 \cr} = R_1^t, \quad
S_2 = \pmatrix{x_9 & x_{10}& x_{11} & x_{12} \cr x_{13} & x_{14} & x_{15} & x_{16} \cr} = R_2^t.$$
Then, 
{\textcolor{mycolor2}{
in Example~\ref{example-low-dim-3},
}}%
the desired map exists if and only if there exists a TPCP map sending 
$$R_1R_1^*, R_2R_2^*, (R_1+R_2)(R_1+R_2)^*, (R_1+iR_2)(R_1+iR_2)^*$$ to
$$S_1S_1^*, S_2S_2^*, (S_1+S_2)(S_1+S_2)^*, (S_1+iS_2)(S_1+iS_2)^*.$$
\end{proposition}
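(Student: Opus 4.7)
The plan is to reduce the existence of the desired map $T$ to the existence of a TPCP map $L\colon M_4\to M_2$ acting only on the $E$-to-$B$ coordinates, and then to recast the constraint $T(X_2)=X_3$ as a block-matrix identity that can be verified on a spanning set.

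First I would put $X_2$ and $X_3$ into $2\times 2$ block form induced by the $A$-system. For $k=1,2$ and $\ell=1,\dots,4$, rewrite the vectors of Example~\ref{example-low-dim-3} as $u_k=|1\ra_A\otimes R_1[:,k]+|2\ra_A\otimes R_2[:,k]$ and $v_\ell=|1\ra_A\otimes S_1[:,\ell]+|2\ra_A\otimes S_2[:,\ell]$, where $R_i[:,k]$ and $S_i[:,\ell]$ denote the indicated columns. Expanding $u_ku_k^{*}$ and $v_\ell v_\ell^{*}$ block-by-block and summing over $k$ and $\ell$ yields
$$X_2=\sum_{i,j=1}^{2}|i\ra\la j|\otimes R_iR_j^{*},\qquad X_3=\sum_{i,j=1}^{2}|i\ra\la j|\otimes S_iS_j^{*}.$$

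Second, every Kraus operator of $T$ has the form $I_2\otimes F_j$, so $T$ factors as the identity on $M_2$ tensored with the map $L(Y)=\sum_j F_jYF_j^{*}$ on $M_4$, which is TPCP because $\sum_j F_j^{*}F_j=I_4$. Applying $T$ block-wise to the block expression for $X_2$ and equating with $X_3$ turns the requirement $T(X_2)=X_3$ into the equivalent system
$$L(R_iR_j^{*})=S_iS_j^{*},\qquad i,j\in\{1,2\}.$$

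Finally I would invoke polarization. The four positive-semidefinite matrices $R_1R_1^{*}$, $R_2R_2^{*}$, $(R_1+R_2)(R_1+R_2)^{*}$, and $(R_1+iR_2)(R_1+iR_2)^{*}$ span the same subspace of $M_4$ as $\{R_iR_j^{*}\}_{i,j=1}^{2}$; polarization expresses each $R_iR_j^{*}$ as an explicit linear combination of these four matrices, and exactly the same combination of the corresponding four $S$-matrices equals $S_iS_j^{*}$. By linearity of $L$, the four block equations above are therefore equivalent to $L$ sending each of the four polarization inputs to the corresponding polarization output with $S$ in place of $R$, which is precisely the criterion asserted in the proposition. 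The only step that requires real care is the block-matrix identification in the first paragraph; once that is in place, the remainder is routine linear algebra essentially identical to the argument used for Proposition~\ref{proposition-low-dim-3}.
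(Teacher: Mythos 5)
Your argument is correct and follows essentially the same route as the paper: the block identification $X_2=(R_uR_v^*)_{u,v}$, $X_3=(S_uS_v^*)_{u,v}$ and the factorization $T=I_2\otimes L$ are exactly how the paper proves the equivalence (a)$\Leftrightarrow$(b) of Theorem~\ref{thm-main1}, of which Proposition~\ref{proposition-low-dim-4} is noted to be a special case, and your polarization step is precisely the reduction asserted in the Remark following Proposition~\ref{proposition-low-dim-3}.
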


{\textcolor{mycolor}{
We remark that Propositions~\ref{proposition-low-dim-3} and \ref{proposition-low-dim-4} are special cases of Theorem~\ref{thm-main1} below.
}}

\section{General result}
\label{sec-general-result}

Suppose
$x = (x_{ijk})\in \IC^n \otimes \IC^p \otimes \IC^q$. We always assume that the entries of
$x$ are arranged in lexicographic (dictionary) order of the indexes $(ijk)$, i.e., 
$x_{111}$ is the first entry and $x_{npq}$ is the last entry.

\begin{theorem}
\label{thm-main1}
Suppose
$x = (x_{ijk})\in \IC^n \otimes \IC^p \otimes \IC^q$. 
Then 
$\tr_1(xx^*) =  \sum_{i=1}^n (x_{ijk})(x_{ijk})^* \in M_{p,q}$,
$$X_2 = \tr_2(xx^*) =  \sum_{j=1}^p (x_{ijk})(x_{ijk})^* \in M_{n,q}, \quad
X_3 = \tr_3(xx^*) =  \sum_{k=1}^q (x_{ijk})(x_{ijk})^* \in M_{n,q}.$$
Let 
$$S_i = (x_{ijk})_{1 \le j \le p, 1 \le k \le q} \in M_{p,q} 
\quad \hbox{ and } \quad R_i = S_i^t \in M_{q,p} \quad \hbox{ for } i = 1, \dots, n.$$
Suppose $R_i$ has rank $k_i$
for $i = 1, \dots, n$.
Set $R_i = U_iD_iV_i^t$ such that $D_i\in M_{k_i}$ is a diagonal matrix with  
positive diagonal entries arranged in descending order,
$U_i$ and $V_i$ have orthonormal columns.
Then the following conditions are equivalent.

{\rm (a)} There is a TPCP map $T: M_n\otimes M_q \rightarrow M_n\otimes M_p$
of the form 
\begin{equation} \label{form2}
X\mapsto \sum_{j=1}^r (I_n\otimes F_j)X(I_n\otimes F_j)^* \quad \hbox{ with } \quad 
F_1, \dots, F_r \in M_{p,q}
\end{equation}
satisfying $T(X_2) = X_3$.

{\rm (b)}
There is a TPCP map sending
$$\{R_uR_v^*: 1 \le u, v \le n\} \qquad \hbox{ to } \qquad
\{S_uS_v^*: 1 \le u, v \le n\}.$$

{\rm (c)} There are $p\times q$ matrices $F_1, \dots, F_r$ with $\sum_{j=1}^r F_j^*F_j = I_q$
and $k_i \times p$ matrices $W_{i1}, \dots, W_{ir}$ 
such that for all $i, j = 1, \dots, n$,
$$[F_1R_i \cdots F_r R_i] = V_iD_i[W_{i1} \cdots W_{ir}]
\quad \hbox{ and } \quad [W_{i1} \cdots W_{ir}][W_{j1} \cdots W_{jr}]^* = U_i^t\overline{U}_j.$$
\end{theorem}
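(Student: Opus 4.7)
\medskip\noindent
\textbf{Proof sketch (plan).} My plan is to recognize both $X_2$ and $X_3$ as $n \times n$ block matrices indexed by the $A$-system, with $(i,i')$-blocks $R_i R_{i'}^*$ and $S_i S_{i'}^*$ respectively, so that the hypothesis that the Kraus operators have the special form $I_n \otimes F_j$ reduces the tripartite transformation problem to a single-system CP-map problem $M_q \to M_p$; then to rewrite that single-system condition in SVD coordinates of each $R_i$ to extract (c).

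For (a)$\iff$(b), I would expand $xx^* = \sum_{i,i'} E_{ii'} \otimes s_i s_{i'}^*$, where $E_{ii'}$ denote the matrix units on $\IC^n$ and $s_i \in \IC^p \otimes \IC^q$ is the flattening of the slice $S_i = (x_{ijk})_{jk}$. A direct partial-trace computation, using $R_i = S_i^t$, gives
$$X_2 = \sum_{i,i'} E_{ii'} \otimes R_i R_{i'}^*, \qquad X_3 = \sum_{i,i'} E_{ii'} \otimes S_i S_{i'}^*.$$
A map of the form (\ref{form2}) then acts block-wise,
$$T(X_2) = \sum_{i,i'} E_{ii'} \otimes L(R_i R_{i'}^*), \qquad L(Y) := \sum_{j=1}^r F_j Y F_j^*,$$
with $L:M_q \to M_p$ being TPCP iff $\sum_j F_j^*F_j = I_q$. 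Comparing blocks immediately gives (a)$\iff$(b).

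For (b)$\Rightarrow$(c), fix a Kraus representation $\{F_j\}$ of such an $L$. The key identity is $U_i^T \overline{U_i} = \overline{U_i^* U_i} = I_{k_i}$, which implies $S_i S_i^* = V_i D_i^2 V_i^*$. The diagonal relation $L(R_i R_i^*) = S_i S_i^*$ then becomes
$$\sum_{j=1}^r (F_j R_i)(F_j R_i)^* = V_i D_i^2 V_i^*,$$
so each $F_j R_i$ has range contained in the column space of $V_i$; since $V_i D_i \in M_{p,k_i}$ has full column rank $k_i$, there exist unique $k_i \times p$ matrices $W_{ij}$ with $F_j R_i = V_i D_i W_{ij}$, which is the first identity of (c). Substituting into $L(R_i R_{i'}^*) = S_i S_{i'}^* = V_i D_i U_i^T \overline{U_{i'}} D_{i'} V_{i'}^*$ and cancelling the full-rank factors $V_i D_i$ on the left and $D_{i'} V_{i'}^*$ on the right yields $\sum_j W_{ij} W_{i'j}^* = U_i^T \overline{U_{i'}}$, which is the second identity of (c). The converse (c)$\Rightarrow$(b) is obtained by substituting $F_j R_i = V_i D_i W_{ij}$ back into the definition of $L(R_i R_{i'}^*)$ and reading off $S_i S_{i'}^*$.

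The arithmetic is essentially bookkeeping. The one subtlety to watch is the transpose $R_i = S_i^t$ entering via the partial trace on the middle system: it is precisely this transpose that makes the algebraically convenient object $U_i^T \overline{U_i} = I_{k_i}$ (rather than some less tractable mixed-conjugate expression) appear on the right-hand side and keeps the cancellations clean. I do not expect a real obstacle beyond careful tracking of the flattening and indexing conventions.
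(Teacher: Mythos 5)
Your proposal is correct and follows essentially the same route as the paper's proof: identify $X_2$ and $X_3$ as $n\times n$ block matrices with blocks $R_uR_v^*$ and $S_uS_v^*$ so that the block-diagonal Kraus form reduces (a) to the single-system condition (b), then use the $i=i'$ (diagonal) case $\sum_j (F_jR_i)(F_jR_i)^* = V_iD_i^2V_i^*$ to force ${\rm col}(F_jR_i)\subseteq{\rm col}(V_iD_i)$, factor $F_jR_i=V_iD_iW_{ij}$, and cancel the full-rank factors to get (c). The only difference is that you spell out a few steps the paper leaves implicit (the identity $U_i^t\overline{U}_i=I_{k_i}$ and the uniqueness of the $W_{ij}$), which is fine.
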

 
\it Proof. \rm Direct checking shows that 
$X_2 = (R_uR_v^*)_{1 \le u,v \le n}$ and $X_3 = (S_u S_v^*)_{1 \le u,v\le n}$.
The map in the form (\ref{form2}) will send $X_2$ to $X_3$ if and only if
$\sum_{j=1}^r (F_jR_u R_v^*F_j) = (S_u S_v^*)_{1 \le u,v \le n}$.
Equivalently, the TPCP map $Y \mapsto \sum_{j=1}^r F_jYF_j^*$ will send
$R_uR_v^*$ to $S_uS_v^*$ for $1 \le u,v \le n$.
Thus, (a) and (b) are equivalent.

Suppose (b) holds. Then  for any $1 \le i, j \le n$,
$$\sum_\ell F_\ell R_i R_j^* F_\ell^* = V_i D_iU_i^t \overline{U}_jD_jV_j^*.$$
Considering $i = j$, we see that  ${\rm col}(F_\ell U_iD_i) \subseteq {\rm col}(V_iD_i)$, 
where col$(X)$ denotes the column space of $X$. Thus,
$F_\ell R_i = V_i D_i W_{i\ell}$ for a suitable $k_i\times p$ matrix $W_{i\ell}$.
Now, 
$$\sum_\ell  F_\ell R_iR_j^* F_\ell^* 
= V_iD_i[W_{i1} \cdots W_{ir}][W_{j1} \cdots W_{jr}]^* D_j V_j^* 
= V_iD_iU_i^t\overline{U}_jD_jV_j^*.$$
Multiplying $D_i^{-1}V_i^*$ to the left and multiplying 
$V_jD_j$ to the right, see that 
$$[W_{i1} \cdots W_{ir}][W_{j1} \cdots W_{jr}]^* = U_i^t\overline{U}_j$$ 
as asserted in (c).
The converse can be checked directly.
\qed

\begin{theorem}
\label{thm-main2}
Use the notation in Theorem~\ref{thm-main1}.
Suppose $R_i = u_id_iv_i^t$ is rank one for $i = 1, \dots, n$.
Then conditions {\rm (a) - (c)} in Theorem~\ref{thm-main1} are equivalent to the following.

{\rm (d)} There are unit vectors $\gamma_1, \dots, \gamma_n \in \IC^r$ and a unitary $U$ such that
$$U[e_1 \otimes u_1 \dots e_1 \otimes u_n] = [\gamma_1 \otimes v_1 \cdots \gamma_n\otimes v_n].$$

{\rm (e)} There is a correlation matrix $C$ such that $(u_i^*u_j) = (v_i^*v_j) \circ C$.

{\rm (f)} There exists a TPCP map sending $u_iu_i^*$ to $v_iv_i^*$ for $i = 1, \dots, n$.

{\textcolor{mycolor2}{
\noindent Here, we abuse the notation of $e_1 \otimes u_i$ to represent a vector in $\IC^{pr}$ with the first $q$ elements being $u_i$ and the remaining elements being zero.
}}%
\medskip\noindent
\end{theorem}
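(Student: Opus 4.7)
The plan is to prove Theorem~\ref{thm-main2} via a short cycle (c) $\Rightarrow$ (f) $\Rightarrow$ (e) $\Rightarrow$ (d) $\Rightarrow$ (c), exploiting the fact that under the rank-one hypothesis $R_i = u_i d_i v_i^t$ the matrix equations of Theorem~\ref{thm-main1}(c) collapse into scalar statements about pairwise inner products of the $u_i$'s and $v_i$'s. For (c) $\Rightarrow$ (f), taking the diagonal block $u=v=i$ of the equivalent condition (a), $T(X_2) = X_3$, gives $\sum_\ell F_\ell R_i R_i^* F_\ell^* = S_i S_i^*$; substituting $R_i = u_i d_i v_i^t$ and $S_i = v_i d_i u_i^t$ and cancelling $d_i^2$ yields $\sum_\ell F_\ell u_i u_i^* F_\ell^* = v_i v_i^*$, which is exactly (f) with TPCP map $L(Y) = \sum_\ell F_\ell Y F_\ell^*$.

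For (f) $\Rightarrow$ (e), since $v_i v_i^*$ is pure each Kraus operator of $L$ must send $u_i$ into $\span\{v_i\}$, so $F_\ell u_i = \beta_{i\ell} v_i$ for scalars $\beta_{i\ell}$, and $\sum_\ell F_\ell^*F_\ell = I_q$ forces $\sum_\ell |\beta_{i\ell}|^2 = 1$. Then
$$u_i^* u_j \;=\; u_i^*\Big(\sum_\ell F_\ell^*F_\ell\Big)u_j \;=\; \sum_\ell (F_\ell u_i)^*(F_\ell u_j) \;=\; \Big(\sum_\ell \overline{\beta_{i\ell}}\,\beta_{j\ell}\Big)(v_i^*v_j),$$
so with $\gamma_i := (\beta_{i1},\ldots,\beta_{ir})^t \in \IC^r$ the matrix $C := (\gamma_i^*\gamma_j)$ is a correlation matrix satisfying $(u_i^*u_j) = (v_i^*v_j)\circ C$, which is (e).

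For (e) $\Rightarrow$ (d), a Gram factorization of the correlation matrix $C$ furnishes unit vectors $\gamma_i \in \IC^r$, and (e) rewrites as $\la u_i, u_j\ra = \la \gamma_i \otimes v_i, \gamma_j \otimes v_j\ra$; hence the assignment $u_i \mapsto \gamma_i \otimes v_i$ extends from $\span\{u_i\}$ to a linear isometry $\IC^q \to \IC^{rp}$ (enlarging $r$ if needed so that $rp \ge q$) and then to a unitary $U \in M_{rp}$. Under the padding convention $e_1 \otimes u_i \in \IC^{rp}$ of the remark, this is precisely (d). Finally, for (d) $\Rightarrow$ (c), I would decompose $U(e_1 \otimes u_i) = \sum_\ell e_\ell \otimes (F_\ell u_i)$ to read off Kraus operators $F_\ell \in M_{p,q}$, extended arbitrarily off $\span\{u_i\}$; unitarity of $U$ gives $\sum_\ell F_\ell^*F_\ell = I_q$ and $F_\ell u_i = (\gamma_i)_\ell v_i$, so setting $W_{i\ell} = (\gamma_i)_\ell v_i^t$ satisfies the two identities required in Theorem~\ref{thm-main1}(c).

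The main obstacle is routine but error-prone bookkeeping: keeping transpose versus complex-conjugate straight (e.g.\ translating $U_i^t\overline{U}_j = u_i^t\overline{u_j}$ into the conjugate form $u_i^*u_j$ that matches the Hadamard identity in (e)), reconciling the tensor-factor orderings in (d), and ensuring the padding $e_1 \otimes u_i \in \IC^{rp}$ is consistent on both sides. The genuinely substantive ingredient is the pure-state TPCP criterion underlying (e) $\Leftrightarrow$ (f), which in essence recovers the classical Alberti--Uhlmann / Chefles--Jozsa--Winter result cited in the introduction.
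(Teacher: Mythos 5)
Your proposal is correct, and every step checks out: the diagonal blocks of $T(X_2)=X_3$ do give $\sum_\ell F_\ell u_iu_i^*F_\ell^* = v_iv_i^*$ after cancelling $d_i^2$ (using $\|v_i\|=1$); purity of $v_iv_i^*$ does force $F_\ell u_i=\beta_{i\ell}v_i$; and the trace-preservation identity $u_i^*u_j=\sum_\ell(F_\ell u_i)^*(F_\ell u_j)$ yields exactly the Hadamard/correlation-matrix condition, from which the Gram-matrix argument produces the unitary and, conversely, the Kraus operators and $W_{i\ell}=(\gamma_i)_\ell v_i^t$ verifying (c). The route differs from the paper's in organization rather than in substance. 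The paper proves (c)\,$\Rightarrow$\,(d) directly by reading off from the matrix equation $F_\ell u_jd_jv_j^t=v_jw_{j\ell}^t$ that $w_{j\ell}\propto v_j$, states that (d)\,$\Rightarrow$\,(c) "can be checked readily," and then delegates the whole equivalence of (d), (e), (f) to the cited reference of Huang et al. Your cycle (c)\,$\Rightarrow$\,(f)\,$\Rightarrow$\,(e)\,$\Rightarrow$\,(d)\,$\Rightarrow$\,(c) reproves the needed portion of that cited result from scratch (the pure-output rigidity of Kraus operators and the Gram-matrix-to-unitary extension), so it is self-contained where the paper is not; the underlying computation extracting $F_\ell u_i\propto v_i$ is the same one the paper performs, just reached via positivity of the diagonal blocks rather than via the column/row-space comparison in condition (c). The price is only the bookkeeping you already flag (conjugates versus transposes in matching $U_i^t\overline{U}_j$ with $u_i^*u_j$, and padding $e_1\otimes u_i$ into $\IC^{pr}$ with $r$ enlarged so that $pr\ge q$), all of which you handle consistently.
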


\it Proof. \rm Using condition (c) and focusing on  the column space and row space 
of $F_\ell u_jd_jv_j^t = v_j w_{j\ell}^t$, we see that 
$w_{j\ell} = d_j\gamma_{j\ell} v_j$ 
for some $\gamma_{j\ell} \in \IC$ for $j = 1, \dots, n$,
and 
$$(u_i^t \overline{u}_j) = ((\gamma_i \otimes v_i)^t (\overline{\gamma}_j \otimes\overline{v}_j))
= (\gamma_i^t \overline{\gamma_j})\circ (v_i^t \overline{v}_j),$$ 
where $\gamma_i = (\gamma_{i1}, \dots, \gamma_{ir})^t$
is a unit vector for $i = 1, \dots, n$.
Thus, there is a unitary $U \in M_{pr}$ such that 
$U[e_1 \otimes u_1 \dots e_1 \otimes u_n] = [\gamma_1 \otimes v_1 \cdots \gamma_n \otimes v_n]$.
If (d) holds, one can check condition (c) readily.

The equivalence of (d), (e), (f) follow from the results in Ref.~\cite{Huang:2012}.
\qed 

\begin{corollary} 
\label{cor-main3}
Use the notation in Theorem~\ref{thm-main2}.
The following are equivalent.

{\rm (a)} There are 
TPCP maps $T = I_n\otimes T_1$ and $L=I_n \otimes L_1$  
such that  $T(X_2) = X_3$ and $L(X_3) = X_2$.

{\rm (b)} There is a diagonal unitary matrix $E \in M_n$
such that   $(u_i^*u_j) = E^*(v_i^*v_j)E$.

{\rm (c)} We may enlarge $[u_1 \cdots u_n]$ and $[v_1 \dots  v_n]$ by adding zero rows to
get $m\times n$ matrices $\tilde U$ and $\tilde V$  
with $m = \max\{p,q\}$ 
such that $W\tilde U = \tilde V E$ for 
a unitary $W \in M_m$ and a diagonal unitary $E \in M_n$. 
\end{corollary}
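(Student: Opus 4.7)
My plan is to establish the equivalences via (a) $\Leftrightarrow$ (b) and (b) $\Leftrightarrow$ (c), leveraging Theorem~\ref{thm-main2} for the first and elementary Gram matrix manipulations for the second. The heart of the argument is (a) $\Rightarrow$ (b), where two opposing TPCP maps must be distilled into a single diagonal unitary $E$.

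For (a) $\Rightarrow$ (b), I apply Theorem~\ref{thm-main2}(e) to both maps: the forward map gives a correlation matrix $C$ with $(u_i^*u_j) = (v_i^*v_j) \circ C$, and by symmetry (swapping the roles of $u$ and $v$ in Theorem~\ref{thm-main2}) the reverse map gives $C'$ with $(v_i^*v_j) = (u_i^*u_j) \circ C'$. Substituting yields $(C \circ C')_{ij} = 1$ on the support $S = \{(i,j) : v_i^*v_j \neq 0\}$, and since $|C_{ij}|, |C'_{ij}| \leq 1$ for any correlation matrix, this forces $|C_{ij}| = 1$ on $S$. Writing $C = Y^*Y$ as a Gram matrix of unit-norm vectors $y_1, \ldots, y_n$, the equality case of Cauchy--Schwarz gives $y_j = \lambda_{ij} y_i$ with $|\lambda_{ij}| = 1$ whenever $(i,j) \in S$; within each connected component of $S$ the $y_k$ are then scalar multiples of a single reference vector, so $C_{ij} = e_i \bar e_j$ on that component for unit-modulus scalars $e_i$. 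Pairs $(i,j)$ in different components have $v_i^*v_j = u_i^*u_j = 0$, so any phase choice is admissible across components, and assembling $E = \diag(\bar e_1, \ldots, \bar e_n)$ produces $(u_i^*u_j) = E^*(v_i^*v_j)E$. The converse (b) $\Rightarrow$ (a) is immediate: $C = \bar e \bar e^*$ is a rank-one correlation matrix witnessing Theorem~\ref{thm-main2}(e) in the forward direction and $C' = \bar C$ witnesses the reverse, so Theorem~\ref{thm-main2}(f) supplies both $T_1$ and $L_1$, which assemble into $T = I_n \otimes T_1$ and $L = I_n \otimes L_1$.

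For (b) $\Leftrightarrow$ (c), I pad $u_i$ and $v_i$ with zero rows to lift them to $\tilde u_i, \tilde v_i \in \IC^m$; padding preserves inner products, so (b) rewrites as $\tilde U^*\tilde U = (\tilde V E)^*(\tilde V E)$. Two $m \times n$ matrices with identical Gram matrices are related by an $m \times m$ unitary: the isometry $\tilde U x \mapsto \tilde V E x$ defined on the column space of $\tilde U$ extends to a unitary $W \in M_m$ by matching orthogonal complements, whose dimensions agree since the ranks of $\tilde U$ and $\tilde V E$ coincide. The converse (c) $\Rightarrow$ (b) is a one-line Gram computation from $W^*W = I_m$. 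The main obstacle I anticipate is the extraction of the phases in (a) $\Rightarrow$ (b): one must argue that the two correlation matrices $C$ and $C'$ jointly force $C$ to be rank one on the support graph of $(v_i^*v_j)$, which rests on the equality case of Cauchy--Schwarz together with a short connectedness argument; once (b) is in hand, the remaining implications are standard linear algebra.
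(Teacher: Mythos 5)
Your proof is correct. The paper itself states Corollary~\ref{cor-main3} without proof (it is presented as a consequence of Theorem~\ref{thm-main2} and the cited results of Huang et al.), so there is no official argument to compare against; your route --- applying condition (e) of Theorem~\ref{thm-main2} in both directions, using $C_{ij}C'_{ij}=1$ together with $|C_{ij}|,|C'_{ij}|\le 1$ to force unimodular entries on the common support, and then extracting the phases $e_i$ via the equality case of Cauchy--Schwarz and a connected-component argument --- is the natural way to fill in the omitted details, and the Gram-matrix/unitary-extension step for (b) $\Leftrightarrow$ (c) is standard and correctly handled.
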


In 
Example~\ref{example-low-dim-2},
we have $R_1 = aE_{11}+bE_{22}$ and $R_2 = aE_{21}-bE_{22}$.
There is TPCP map sending $X_3$ to $X_2$ if and only if
$a^2 - b^2 = 0$. In such a case, we can set
$L(X) = (I_2\otimes G)X(I_2\otimes G)^*$
with $G  = (a^2+b^2)^{-1/2}\pmatrix{a & b \cr a & -b\cr} = \sqrt{2} \pmatrix{a & b \cr a & -b\cr}$.
By the fact that $a^2+b^2 = 1/2$ and $a^2 - b^2 = 0$, we see that
$L(X_3) = X_2$.

\section{Physical interpretation 
of the transformability conditions
}
\label{sec-physical-interpretation}

Using the physics notation, Alice, Bob, and Eve share a tripartite pure state $\ket{\Psi}_{ABE}$ which corresponds to $x$ in Sec.~\ref{sec-general-result} with $x_{ijk}=\braket{ijk}{\Psi}_{ABE}$ where $\ket{ijk}_{ABE}$ is an eigenstate in the computational basis (note that the indexes start at $1$ instead of the usual $0$).
Define $\ket{\Psi_i}_{BE} \triangleq (\bra{i}_A \otimes I_{BE})\ket{\Psi}_{ABE}$, which is a state conditional on $A$ being $\ket{i}_A$.
Thus, we have
$$
\ket{\Psi}_{ABE}=\sum_{i=1}^n \ket{i}_A \ket{\Psi_i}_{BE}.
$$
According to the definitions of $R_i$ and $S_i$,
\begin{eqnarray*}
R_i R_i^*=\tr_{B} ( \ket{\Psi_i}_{BE} \bra{\Psi_i} ) \triangleq \rho_E^{(i)} && \hbox{and }\\
S_i S_i^*=\tr_{E} ( \ket{\Psi_i}_{BE} \bra{\Psi_i} ) \triangleq \rho_B^{(i)} && \hbox{for } i=1,\dots,n.
\end{eqnarray*}
In other words, $\rho_E^{(i)}$ is the reduced density matrix of $E$ conditioned on $A$ being $\ket{i}_A$; similarly for $\rho_B^{(i)}$.
{\textcolor{mycolor}{
Note that for the rest of this section, we use the physics notation of $A,B,E$ to label states.
}}

If Eve can imitate Bob using quantum channel $\mathcal E$ (i.e., ${\mathcal E}(\rho_{AE})=\rho_{AB}$), then
\begin{eqnarray}
\label{eqn-transform-any-projection}
\bra{\phi}_A {\mathcal E}(\rho_{AE}) \ket{\phi}_A= 
\bra{\phi}_A \rho_{AB} \ket{\phi}_A 
\end{eqnarray}
for any $\ket{\phi}_A$.
Thus, when the projections are on the computational basis for $A$, the
quantum channel is able to transform $\rho_E^{(i)}$ to $\rho_B^{(i)}$,
i.e.,
\begin{equation}
\label{eqn-transform-computation-basis}
{\mathcal E} (R_i R_i^*)=S_i S_i^* ,
\end{equation}
for $i=1,\dots,n$.
On the other hand,
the transformability condition of Theorem~\ref{thm-main1} (b) includes additional cross terms (i.e., $R_u R_v^* \rightarrow S_u S_v^* $ for $u \ne v$).
Essentially, the transformability of the cross terms guarantees the transformability of $E$ to $B$ in other bases.
To see this, consider the $\{+,-\}$ complementary basis for $A$ where we define $\ket{\pm}_A=(\ket{1}_A \pm \ket{2}_A)/\sqrt{2}$.
If Eve is able to pretend to be Bob, 
\eqref{eqn-transform-any-projection} means that 
the transformation
$\rho_E^{(\pm)} \rightarrow \rho_B^{(\pm)}$ is possible,
where
$\rho_E^{(\pm)} \triangleq \tr_{B} ( \ket{\Psi_\pm}_{BE} \bra{\Psi_\pm} ) $,
$\rho_B^{(\pm)} \triangleq \tr_{E} ( \ket{\Psi_\pm}_{BE} \bra{\Psi_\pm} ) $, and
$\ket{\Psi_\pm}_{BE}=(\ket{\Psi_1}_{BE} \pm \ket{\Psi_2}_{BE})/\sqrt{2}$.
In other words,
\eqref{eqn-transform-any-projection} becomes
\begin{eqnarray}
\label{eqn-basis+--transform0}
&&{\mathcal E}(\rho_E^{(\pm)}) = \rho_B^{(\pm)}
\\
&\Leftrightarrow&
\nonumber
\\
&&{\mathcal E}(R_1 R_1^* + R_2 R_2^* \pm R_1 R_2^* \pm R_2 R_1^* ) = S_1 S_1^* + S_2 S_2^* \pm S_1 S_2^* \pm S_2 S_1^* .
\label{eqn-transform-+-basis}
\end{eqnarray}
Here, the reduced density matrix of $E$ conditioned on $A$ being $\ket{\pm}_A$ is $\rho_E^{(\pm)}=R_\pm R_\pm^*$.
According to the definition of $R_i$ which is a rearragement of the elements of $\ket{\Psi_i}_{BE}$ for $i=+,-,1,2$,
$R_\pm=(R_1 \pm R_2)/\sqrt{2}$.
We have similar expressions for $B$.
Given \eqref{eqn-transform-computation-basis}, \eqref{eqn-transform-+-basis} is true if and only if 
\begin{eqnarray*}
{\mathcal E}( R_1 R_2^* + R_2 R_1^* ) = S_1 S_2^* + S_2 S_1^*.
\end{eqnarray*}
This shows that the transformability of the cross terms $R_u R_v^* \rightarrow S_u S_v^*$ for $u \neq v$ guarantees the transformability of $E$ to $B$ in other bases.
Therefore, one cannot simplify the condition checking of Theorem~\ref{thm-main1} (b) by ignoring the cross terms.
The following example illustrates this point by showing that there exists a state for which
$R_u R_v^* \rightarrow S_u S_v^*$ for $u=v$ but not $u \neq v$.

\begin{example}
{\rm

The initial state is a $3\times 2  \times 2$ system in $A$, $B$, and $E$:
\begin{eqnarray}
\ket{\Psi}_{ABE}&=&
\label{eqn-original-state1}
\ket{1}_A \otimes 
\left[
\begin{pmatrixQ}\alpha\\ \beta\end{pmatrixQ}_B
\otimes
\begin{pmatrixQ}a\\ b\end{pmatrixQ}_E
+
\begin{pmatrixQ}\alpha\\ -\beta\end{pmatrixQ}_B
\otimes
\begin{pmatrixQ}a\\ -b\end{pmatrixQ}_E
\right]
+
\nonumber
\\
&&
\ket{2}_A \otimes 
\begin{pmatrixQ}\alpha\\ \imath \beta\end{pmatrixQ}_B
\otimes
\begin{pmatrixQ}a\\ b\end{pmatrixQ}_E
+
\ket{3}_A \otimes 
\begin{pmatrixQ}\alpha\\ -\imath \beta\end{pmatrixQ}_B
\otimes
\begin{pmatrixQ}a\\ -b\end{pmatrixQ}_E
\\
&\triangleq&
\ket{1}_A \otimes
\Big[
\ket{p_+}_B \otimes \ket{\phi_+}_E
+
\ket{p_-}_B \otimes \ket{\phi_-}_E
\Big]
+
\nonumber
\\
&&
\ket{2}_A \otimes \ket{q_+}_B \otimes \ket{\phi_+}_E
+
\ket{3}_A \otimes \ket{q_-}_B \otimes \ket{\phi_-}_E
\end{eqnarray}
where $\alpha, \beta=\sqrt{1-\alpha^2}, a, b=\sqrt{1-a^2} \in \cR$, and $\imath=\sqrt{-1}$.

Our goal is to show that there exists a quantum channel $\mathcal E$ such that (i)
${\mathcal E}(\rho_E^{(j)})=\rho_B^{(j)}, j=1,2,3$, i.e.,
\begin{equation}
\label{eqn-R-diagonal-terms}
{\mathcal E}(R_j R_j^*)=S_j S_j^*, \:\: j=1,2,3,
\end{equation}
and (ii) there does not exist a quantum channel
$\mathcal E$ such that
\begin{equation}
\label{eqn-R-all-terms}
{\mathcal E}(R_j R_k^*)=S_j S_k^*, \:\: j,k=1,2,3.
\end{equation}
This means that \eqref{eqn-R-diagonal-terms} does not imply \eqref{eqn-R-all-terms}.

We show that \eqref{eqn-R-diagonal-terms} holds but \eqref{eqn-basis+--transform0} does not hold for the state in \eqref{eqn-original-state1}.

\subsubsection*{Proof of the validity of \eqref{eqn-R-diagonal-terms}}

First, we show that \eqref{eqn-R-diagonal-terms} holds.
Assume that 
$
\braket{\phi_+}{\phi_-} < \braket{q_+}{q_-}
$
and so there exists 
a quantum channel $\mathcal E$ that transforms
$\ket{\phi_\pm} \longrightarrow \ket{q_\pm}$.
This can be verified by comparing the Gram matrices of the initial set of states and the final one~\cite{Uhlmann1985,Chefles200014,PhysRevA.65.052314,Chefles_Jozsa_Winter_2004}.

The quantum channel $\mathcal E$ is equivalent to
a unitary transformation $U_{E E'}$ using an extended Hilbert space $E'$:
\begin{eqnarray}
\ket{\Psi'}_{ABEE'} 
&=&
U_{E E'} \ket{\Psi}_{ABE} \ket{0}_{E'}
\\
&=&
\ket{1}_A \otimes 
\ket{\Psi_1'}_{BEE'}
+
\ket{2}_A \otimes 
\ket{\Psi_2'}_{BEE'}
+
\ket{3}_A \otimes 
\ket{\Psi_3'}_{BEE'}
\end{eqnarray}
where $\ket{\Psi_j'}_{BEE'}= U_{E E'} \ket{\Psi_j}_{BE} \ket{0}_{E'}, j=1,2,3$.

Note that
${\mathcal E} (R_j R_j^*)=\tr_{B E'} ( \ket{\Psi_j'}_{BEE'} \bra{\Psi_j'} ) = {\mathcal E} (\rho_E^{(j)})$.

In order that ${\mathcal E} (\rho_E^{(2)})=\rho_B^{(2)}=\ket{q_+}\bra{q_+}$, 
$U_{E E'}$ must transform as
\begin{eqnarray*}
&&
U_{E E'} \ket{\Psi_2}_{BE} \ket{0}_{E'}
\\
&=&
U_{E E'} \ket{q_+}_B \ket{\phi_+}_E \ket{0}_{E'}
\\
&=&
\ket{q_+}_B \ket{q_+}_E \ket{x_+}_{E'},
\end{eqnarray*}
where $\ket{x_+}_{E'}$ is some normalized vector.
Similarly, ${\mathcal E} (\rho_E^{(3)})=\rho_B^{(3)}$ implies that
$$
U_{E E'} \ket{\Psi_3}_{BE} \ket{0}_{E'}
=
\ket{q_-}_B \ket{q_-}_E \ket{x_-}_{E'},
$$
where $\ket{x_-}_{E'}$ is some normalized vector.
Then, we have
\begin{eqnarray}
\label{eqn-state-after-channel-Z}
\ket{\Psi'}_{ABEE'} 
&=&
\ket{1}_A 
\Big[
\ket{p_+}_B \ket{q_+}_E \ket{x_+}_{E'}
+
\ket{p_-}_B \ket{q_-}_E \ket{x_-}_{E'}
\Big]
+
\nonumber
\\
&&
\ket{2}_A \ket{q_+}_B \ket{q_+}_E \ket{x_+}_{E'}
+
\ket{3}_A \ket{q_-}_B \ket{q_-}_E \ket{x_-}_{E'}
\end{eqnarray}
We now verify that ${\mathcal E} (\rho_E^{(1)})=\rho_B^{(1)}$.
The LHS is
\begin{eqnarray*}
{\mathcal E} (\rho_E^{(1)})
&=&
\tr_{B E'} ( \ket{\Psi_1'}_{BEE'} \bra{\Psi_1'} )
\\
&=&
\tr_{B E'} \Big[ P\big(
\ket{p_+}_B \ket{q_+}_E \ket{x_+}_{E'}
+
\ket{p_-}_B \ket{q_-}_E \ket{x_-}_{E'}
\big)
\Big]
\\
&=&
\ket{q_+}_E \bra{q_+}
+
\ket{q_-}_E \bra{q_-}
+
C
\ket{q_+}_E \bra{q_-}
+
C^*
\ket{q_-}_E \bra{q_+}
\end{eqnarray*}
where $P(\ket{\varphi})\triangleq\ket{\varphi}\bra{\varphi}$, and
$C \triangleq \braket{p_-}{p_+}_B
\braket{x_-}{x_+}_{E'}$.
Substituting the various vectors using \eqref{eqn-original-state1},
\begin{eqnarray}
{\mathcal E} (\rho_E^{(0)})
&=&
\begin{pmatrixQ}
2 \alpha ^2 &0\\0&2 \beta^2
\end{pmatrixQ}
+
C 
\begin{pmatrixQ}
\alpha ^2 & \imath \alpha \beta \\ \imath \alpha \beta & -\beta^2
\end{pmatrixQ}
+
C ^*
\begin{pmatrixQ}
\alpha ^2 & -\imath \alpha \beta \\-\imath \alpha \beta & -\beta^2
\end{pmatrixQ}.
\end{eqnarray}
The RHS is
\begin{eqnarray*}
\rho_B^{(1)}
&=&
\tr_{E} ( \ket{\Psi_1}_{BE} \bra{\Psi_1} )
\\
&=&
\tr_{E}
\Big[
P\big(
\ket{p_+}_B \ket{\phi_+}_E
+
\ket{p_-}_B \ket{\phi_-}_E
\big)
\Big]
\\
&=&
\ket{p_+}_B \bra{p_+}
+
\ket{p_-}_B \bra{p_-}
+
\braket{\phi_-}{\phi_+}_E
\ket{p_+}_B \bra{p_-}
+
\braket{\phi_+}{\phi_-}_E
\ket{p_-}_B \bra{p_+}
\\
&=&
\begin{pmatrixQ}
2 \alpha ^2 &0\\0&2 \beta^2
\end{pmatrixQ}
+
(a^2-b^2)
\begin{pmatrixQ}
2 \alpha ^2 &0\\0&-2 \beta^2
\end{pmatrixQ}.
\end{eqnarray*}
This means that
${\mathcal E} (\rho_E^{(1)})=\rho_B^{(1)}$ if and only if
$C=a^2-b^2$.
This is possible since we have assumed that
$
\braket{\phi_+}{\phi_-} < \braket{q_+}{q_-}=\braket{p_+}{p_-}
$.
We impose that $\ket{x_\pm}$ be chosen such that $C=a^2-b^2$, and thus
${\mathcal E} (\rho_E^{(j)})=\rho_B^{(j)}$ for $j=1,2,3$.

\subsubsection*{Invalidity of \eqref{eqn-basis+--transform0} for the state in \eqref{eqn-original-state1}}

We now show that \eqref{eqn-basis+--transform0} does not hold given that 
\begin{eqnarray}
\label{eqn-assumption1}
C=(\alpha^2-\beta^2)\braket{x_-}{x_+}=a^2-b^2.
\end{eqnarray}
Expressing \eqref{eqn-state-after-channel-Z} in the $\{+,-\}$ basis, we have
\begin{eqnarray*}
\ket{\Psi'}_{ABEE'} 
&=&
U_{EE'}\ket{\Psi}_{ABE}\ket{0}_{E'}
\\
&=&
\frac{1}{\sqrt{2}}
\ket{+}_A 
\Big[
(\ket{p_+}_B+\ket{q_+}_B) \ket{q_+}_E \ket{x_+}_{E'}
+
\ket{p_-}_B \ket{q_-}_E \ket{x_-}_{E'}
\Big]
+
\\
&&
\frac{1}{\sqrt{2}}
\ket{-}_A 
\Big[
(\ket{p_+}_B-\ket{q_+}_B) \ket{q_+}_E \ket{x_+}_{E'}
+
\ket{p_-}_B \ket{q_-}_E \ket{x_-}_{E'}
\Big]
+
\nonumber
\\
&&
\ket{3}_A \ket{q_-}_B \ket{q_-}_E \ket{x_-}_{E'}
\\
&\triangleq&
\ket{+}_A \otimes 
\ket{\Psi_+'}_{BEE'}
+
\ket{-}_A \otimes 
\ket{\Psi_-'}_{BEE'}
+
\ket{3}_A \otimes 
\ket{\Psi_3'}_{BEE'} .
\end{eqnarray*}
We show that ${\mathcal E} (\rho_E^{(-)})\neq\rho_B^{(-)}$.
The LHS is
\begin{eqnarray*}
2{\mathcal E} (\rho_E^{(-)})
&=&
2\tr_{B E'} ( \ket{\Psi_-'}_{BEE'} \bra{\Psi_-'} )
\\
&=&
\tr_{B E'} \Big[ P\big(
\ket{p_+'}_B \ket{q_+}_E \ket{x_+}_{E'}
+
\ket{p_-}_B \ket{q_-}_E \ket{x_-}_{E'}
\big)
\Big]
\\
&=&
\braket{p_+'}{p_+'}
\ket{q_+}_E \bra{q_+}
+
\ket{q_-}_E \bra{q_-}
+
D
\ket{q_+}_E \bra{q_-}
+
D^*
\ket{q_-}_E \bra{q_+}
\end{eqnarray*}
where $\ket{p_+'}_B=\ket{p_+}_B-\ket{q_+}_B$, and
$D \triangleq \braket{p_-}{p_+'}_B
\braket{x_-}{x_+}_{E'}$.
Substituting the various vectors using \eqref{eqn-original-state1},
\begin{eqnarray*}
\ket{p_+'}_B&=&\begin{pmatrixQ}0\\ (1-\imath)\beta\end{pmatrixQ}_B
\:\:\rm{and}
\\
2{\mathcal E} (\rho_E^{(-)})
&=&
2\beta^2
\begin{pmatrixQ}
\alpha^2 & -\imath \alpha \beta \\ \imath \alpha \beta & \beta^2
\end{pmatrixQ}
+
\begin{pmatrixQ}
\alpha^2 & \imath \alpha \beta \\-\imath \alpha \beta & \beta^2
\end{pmatrixQ}
+
D
\begin{pmatrixQ}
\alpha ^2 & \imath \alpha \beta \\ \imath \alpha \beta & -\beta^2
\end{pmatrixQ}
+
D^*
\begin{pmatrixQ}
\alpha ^2 & -\imath \alpha \beta \\-\imath \alpha \beta & -\beta^2
\end{pmatrixQ}
\end{eqnarray*}
where $D=-(1-{\imath})\beta^2 \braket{x_-}{x_+}$.
The RHS is
\begin{eqnarray*}
2\rho_B^{(-)}
&=&
2\tr_{E} ( \ket{\Psi_-}_{BE} \bra{\Psi_-} )
\\
&=&
\tr_{E}
\Big[
P\big(
\ket{p_+'}_B \ket{\phi_+}_E
+
\ket{p_-}_B \ket{\phi_-}_E
\big)
\Big]
\\
&=&
\ket{p_+'}_B \bra{p_+'}
+
\ket{p_-}_B \bra{p_-}
+
\braket{\phi_-}{\phi_+}_E
\ket{p_+'}_B \bra{p_-}
+
\braket{\phi_+}{\phi_-}_E
\ket{p_-}_B \bra{p_+'}
\\
&=&
\begin{pmatrixQ}
0 &0\\0&2 \beta^2
\end{pmatrixQ}
+
\begin{pmatrixQ}
\alpha ^2 &-\alpha \beta\\-\alpha \beta& \beta^2
\end{pmatrixQ}
+
(a^2-b^2)
\begin{pmatrixQ}
0 & (1+\imath) \alpha \beta \\ (1-\imath) \alpha \beta &-2 \beta^2
\end{pmatrixQ}.
\end{eqnarray*}
To show that
${\mathcal E} (\rho_E^{(-)})\neq\rho_B^{(-)}$, we compare their $(1,1)$ elements.
The RHS is $\alpha^2$, and the LHS is
$\alpha^2 (2\beta^2 +1 +D + D^*) = \alpha^2 (2\beta^2 +1 -2\beta^2 \braket{x_-}{x_+}_{E'})$.
However, due to the assumption in \eqref{eqn-assumption1}, $\braket{x_-}{x_+}_{E'}\neq 1$ in general.
Therefore, ${\mathcal E} (\rho_E^{(-)})\neq\rho_B^{(-)}$.

}
\end{example}

\section{
Degradability for a given quantum channel
}
\label{sec-connection}

{\textcolor{mycolor}{
We discuss state degradability when the overall state is generated by a given quantum channel.
We show that if the channel output state of the maximally entangled input state is degradable, then the corresponding output state is degradable for any input state.
Also, if the channel output state of the maximally entangled input state is not degradable,
then the corresponding output state is not degradable for any input state in a special class.
\begin{theorem} 
\label{thm-connection-1}
Suppose that a state $|{\tilde{\psi}}\rangle_{ABE}$ is generated by processing a state $\ket{\psi}_{AB}$
by a channel 
$\Phi_B$
acting on subsystem $B$ with an ancilla in subsystem $E$.
The channel is implemented by a unitary extension $U_{BE}$ as follows:
$$
|{\tilde{\psi}}\rangle_{ABE}=
(I_A \otimes U_{BE}) \ket{\psi}_{AB} \ket{0}_E 
\triangleq U_{ABE} \ket{\psi}_{AB} \ket{0}_E ,
$$
where 
$\Phi_B (\tr_A (P(\ket{\psi}_{AB}))) = \tr_{AE}(P(|{\tilde{\psi}}\rangle_{ABE}))$
with
$P(\ket{\cdot})=\ket{\cdot}\bra{\cdot}$.
We assume that the dimensions of subsystems $A$ and $B$ are the same, $n$, so that the
maximally entangled state $\ket{\psi_M}_{AB}=\sum_{i=1}^n \ket{ii}_{AB}$ is defined.
Using $\ket{\psi_M}_{AB}$ as the input state,
if the output state 
$U_{ABE} \ket{\psi_M}_{AB} \ket{0}_E$
is $E \rightarrow B$ degradable with $T$ of the form (\ref{TX}) [i.e., $T$ satisfies
$T(\rho_{AE}) = \rho_{AB}$
where 
$\rho_{AE} = \tr_B(\rho) \hbox{ and } \rho_{AB} = \tr_E(\rho)   
\hbox { with } \rho = 
P(U_{ABE} \ket{\psi_M}_{AB} \ket{0}_E)$],
then the output state 
$U_{ABE} \ket{\psi}_{AB} \ket{0}_E$
is $E \rightarrow B$ degradable with the same $T$ for any input state $\ket{\psi}_{AB}$.
\end{theorem}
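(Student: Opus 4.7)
The plan is to exploit the standard fact that any pure state $\ket{\psi}_{AB} \in \IC^n \otimes \IC^n$ can be written as $(M_A \otimes I_B)\ket{\psi_M}_{AB}$ for some operator $M_A \in M_n$ (this is the Choi-Jamio\l{}kowski-type parametrization; concretely, if $\ket{\psi}_{AB} = \sum_{ij}\psi_{ij}\ket{i}_A\ket{j}_B$, then taking $M_A$ with entries $(M_A)_{ij} = \psi_{ij}$ does the job). The hypothesis that $T$ has the form $\sum_j (I_A \otimes F_j)(\,\cdot\,)(I_A \otimes F_j)^*$ means $T$ acts trivially on the $A$ system, so $M_A$-factors will commute through $T$ cleanly.

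First I would reduce the output state for a general input to the output state for the maximally entangled input. Since $U_{ABE} = I_A \otimes U_{BE}$, the operator $M_A \otimes I_{BE}$ commutes with $U_{ABE}$, hence
$$ |{\tilde{\psi}}\rangle_{ABE} = U_{ABE}(M_A \otimes I_B)\ket{\psi_M}_{AB}\ket{0}_E = (M_A \otimes I_{BE}) |{\tilde{\psi}_M}\rangle_{ABE}, $$
where $|{\tilde{\psi}_M}\rangle_{ABE} = U_{ABE}\ket{\psi_M}_{AB}\ket{0}_E$. Taking partial traces over $B$ and $E$ respectively, and writing $\rho_{M,AE}, \rho_{M,AB}$ for the reductions of $P(|{\tilde{\psi}_M}\rangle)$, this gives
$$ \rho_{AE} = (M_A \otimes I_E)\,\rho_{M,AE}\,(M_A^* \otimes I_E), \qquad \rho_{AB} = (M_A \otimes I_B)\,\rho_{M,AB}\,(M_A^* \otimes I_B). $$

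Next I would apply $T$ to $\rho_{AE}$ and use the commutation $(I_A \otimes F_j)(M_A \otimes I_E) = (M_A \otimes I_B)(I_A \otimes F_j)$ (both sides equal $M_A \otimes F_j$). This yields
$$ T(\rho_{AE}) = (M_A \otimes I_B)\Bigl(\sum_j (I_A \otimes F_j)\,\rho_{M,AE}\,(I_A \otimes F_j)^*\Bigr)(M_A^* \otimes I_B) = (M_A \otimes I_B)\,T(\rho_{M,AE})\,(M_A^* \otimes I_B). $$
By the hypothesis that $T$ degrades the maximally entangled output, $T(\rho_{M,AE}) = \rho_{M,AB}$, so the right-hand side equals $(M_A \otimes I_B)\rho_{M,AB}(M_A^* \otimes I_B) = \rho_{AB}$, which is exactly what we need.

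There is no real obstacle: the whole argument hinges on two commutation facts (that $M_A \otimes I$ commutes with $I_A \otimes U_{BE}$ and with each $I_A \otimes F_j$ because they act on disjoint tensor factors), together with the elementary parametrization of pure states via the maximally entangled state. The only care needed is to apply the parametrization consistently (i.e., keeping $M_A$ on the $A$ side throughout) and to note that $T$ being completely positive and trace-preserving is preserved by the argument because $T$ itself is unchanged; we only observe that it produces the correct image on the specific input $\rho_{AE}$ for any pure $\ket{\psi}_{AB}$.
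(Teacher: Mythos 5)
Your proposal is correct and follows essentially the same route as the paper's proof: both write $\ket{\psi}_{AB}$ as $(M_A\otimes I_B)\ket{\psi_M}_{AB}$ (the paper calls the operator $K_A$), commute the $A$-side factor through $U_{ABE}$ and through $T$ (since $T$ acts trivially on $A$), and conclude $T(\rho_{AE})=\rho_{AB}$. Your write-up just makes the commutation steps slightly more explicit than the paper does.
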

\begin{proof}
First, note that any state $\ket{\psi}_{AB}$ can be expressed as $\ket{\psi}_{AB} = (K_A \otimes I_B) \ket{\psi_M}_{AB}$ where $K_A=\sum_{i,j=1}^n \ket{j}_A \bra{i} \: \braket{ji}{\psi}_{AB}$.
Note that $K_A$ is not necessarily invertible.
Next, the condition for $E \rightarrow B$ degradability of the maximally entangled state means that
\begin{eqnarray}
\label{eqn-channel-degradability-1}
&&T(\rho_{AE}) = \rho_{AB}
\\
&\Rightarrow&
T( (K_A\otimes I) \rho_{AE} (K_A^* \otimes I))= (K_A \otimes I) \rho_{AB} (K_A^* \otimes I)
\hbox{ for any }K_A
\end{eqnarray}
where the last line is because $T$ acts only on subsystem $E$.
Finally, the term on the LHS is
\begin{eqnarray}
\label{eqn-channel-degradability-2}
(K_A\otimes I) \rho_{AE} (K_A^* \otimes I)
&=&
\tr_B\left[
P(U_{ABE} (K_A \otimes I_{BE}) \ket{\psi_M}_{AB} \ket{0}_E)
\right]
\\
&=&
\tr_B\left[
P(U_{ABE} \ket{\psi}_{AB} \ket{0}_E)
\right]
\end{eqnarray}
and we have an analogous term on the RHS.
Thus,
we have
\begin{eqnarray}
T(
\tr_B\left[
P(U_{ABE} \ket{\psi}_{AB} \ket{0}_E)
\right]
)
=
\tr_E\left[
P(U_{ABE} \ket{\psi}_{AB} \ket{0}_E)
\right]
\label{eqn-channel-degradability-3}
\end{eqnarray}
which means that the output state is $E \rightarrow B$ degradable for any input state $\ket{\psi}_{AB}$.
\end{proof}
\begin{corollary}
\label{corollary-connection-2}
For the channel
$\Phi_B$,
if the output state 
$U_{ABE} \ket{\psi_M}_{AB} \ket{0}_E$
is $E \rightarrow B$ degradable for the maximally entangled input state $\ket{\psi_M}_{AB}$, then the channel 
$\Phi_B$
is anti-degradable with respect to the complementary channel 
$\Phi_E$
[i.e.,
there exists a quantum channel $\hat{T}: H_E \rightarrow H_B$ such that 
$\hat{T} \circ \Phi_E = \Phi_B$].
Here, in terms of $U_{BE}$,
\begin{eqnarray}
\Phi_B(\rho_B)&=&\tr_E\left[
U_{BE} (\rho_B \otimes \ket{0}_E\bra{0}) U_{BE}^*
\right], \hbox{ and }
\\
\Phi_E(\rho_B)&=&\tr_B\left[
U_{BE} (\rho_B \otimes \ket{0}_E\bra{0}) U_{BE}^*
\right].
\end{eqnarray}
\end{corollary}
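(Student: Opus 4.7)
The plan is to combine Theorem~\ref{thm-connection-1} with a purification argument to promote state-level degradability to channel-level anti-degradability. By Theorem~\ref{thm-connection-1}, the assumed $E\to B$ degradability of the output for the maximally entangled input already extends, with the \emph{same} map $T(X) = \sum_{j} (I_A\otimes F_j)X(I_A\otimes F_j)^{*}$ (where $\sum_j F_j^{*}F_j = I_E$), to $E\to B$ degradability of $U_{ABE}\ket{\psi}_{AB}\ket{0}_E$ for \emph{every} pure input $\ket{\psi}_{AB}$. Because $T$ acts trivially on the $A$-factor, the natural candidate for the anti-degrading channel is the single-subsystem map $\hat T(Y) := \sum_{j} F_j Y F_j^{*}$ on $H_E$, which is automatically TPCP.

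To verify $\hat T \circ \Phi_E = \Phi_B$, I would fix an arbitrary state $\rho_B$ on $H_B$ and purify it to a vector $\ket{\psi}_{AB}\in H_A\otimes H_B$; this is possible precisely because $\dim H_A = \dim H_B = n$. Tracing out $A$ and $E$ from $U_{ABE}\ket{\psi}_{AB}\ket{0}_E\bra{\psi}\bra{0}U_{ABE}^{*}$ recovers $\Phi_B(\rho_B)$, while tracing out $A$ and $B$ recovers $\Phi_E(\rho_B)$; meanwhile tracing out only $B$ (resp.\ $E$) yields the operators $\rho_{AE}$ and $\rho_{AB}$ appearing in the definition of state degradability. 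Applying $T(\rho_{AE})=\rho_{AB}$ from Theorem~\ref{thm-connection-1} and then taking $\tr_A$ of both sides collapses the left-hand side, via the $I_A\otimes F_j$ structure of $T$, to $\hat T(\Phi_E(\rho_B))$, and the right-hand side to $\Phi_B(\rho_B)$.

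Since $\rho_B$ is arbitrary, this gives $\hat T\circ \Phi_E = \Phi_B$, which is the defining property of anti-degradability. I do not anticipate a genuine obstacle: the substantive content is absorbed into Theorem~\ref{thm-connection-1}, and the remaining work consists of the two elementary observations that (i) every mixed state $\rho_B$ admits a purification in $H_A\otimes H_B$ whenever $\dim H_A = \dim H_B$, and (ii) $\tr_A$ commutes with any operator of the form $I_A\otimes F_j$. The only mild care needed is tracking normalization: the purification $\ket{\psi}_{AB}$ of $\rho_B$ need not have unit norm, but both sides of the identity are linear in $\ket{\psi}_{AB}\bra{\psi}$, so no adjustment is required.
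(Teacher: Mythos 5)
Your proposal is correct and follows essentially the same route as the paper: invoke Theorem~\ref{thm-connection-1} to get degradability for every input, purify an arbitrary $\rho_B$ into $H_A\otimes H_B$ (possible since $\dim H_A=\dim H_B=n$), and trace out $A$ from both sides of Eq.~\eqref{eqn-channel-degradability-3}, using the $I_A\otimes F_j$ structure of $T$ to extract the single-system channel $\hat T$ with $\hat T\circ\Phi_E=\Phi_B$. The paper's proof is just a terser statement of exactly this argument.
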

\begin{proof}
Since any state $\rho_B$ of dimension $n$ can be purified with a subsystem $A$ of dimension $n$ such that $\rho_B=\tr_A (P(\ket{\psi}_{AB}))$ for some $\ket{\psi}_{AB}$,
we can trace out subsystem $A$ on both sides of Eq.~\eqref{eqn-channel-degradability-3} to get
$\Phi_E$ processed by a channel acting on $E$ on the LHS and $\Phi_B$ on the RHS.
\end{proof}
\begin{remark}
In the proof of Theorem~\ref{thm-connection-1},
the operator $K_A$ performed on subsystem $A$ can have an operational interpretation related to entanglement transformation.
It may be viewed as a local filtering operation that can be implemented probabilistically by a quantum measurement.
This operation locally transforms a maximally entangled state to any other given state (entangled or unentangled) probabilistically.
\end{remark}
\begin{theorem}
\label{thm-connection-3}
Following the notations in Theorem~\ref{thm-connection-1}, 
for the input state $\ket{\psi_M}_{AB}$,
if the output state 
$U_{ABE} \ket{\psi_M}_{AB} \ket{0}_E$
is not $E \rightarrow B$ degradable,
then the output state 
$U_{ABE} \ket{\psi}_{AB} \ket{0}_E$
is not $E \rightarrow B$ degradable for any input state $\ket{\psi}_{AB}=(W_A \otimes I_B)\ket{\psi_M}_{AB}$ where $W_A$ is invertible.
\end{theorem}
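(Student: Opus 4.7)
The plan is to prove Theorem~\ref{thm-connection-3} by contrapositive: assume the output state for the input $\ket{\psi}_{AB}=(W_A\otimes I_B)\ket{\psi_M}_{AB}$ is $E\to B$ degradable, and deduce that the output state for $\ket{\psi_M}_{AB}$ must then also be $E\to B$ degradable, contradicting the hypothesis.

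The first step is to reuse the identity that underlies the proof of Theorem~\ref{thm-connection-1}. Writing $\rho^{(M)}$ and $\rho^{(\psi)}$ for the $ABE$-purifications arising from the inputs $\ket{\psi_M}_{AB}$ and $\ket{\psi}_{AB}$, respectively, substituting $\ket{\psi}_{AB}=(W_A\otimes I_B)\ket{\psi_M}_{AB}$ into $U_{ABE}\ket{\psi}_{AB}\ket{0}_E$ and using that $W_A$ commutes with both $U_{BE}$ and the partial traces over $B$ and $E$ (since it acts on $A$ only), one obtains
$$\rho_{AE}^{(\psi)} = (W_A\otimes I_E)\rho_{AE}^{(M)}(W_A^*\otimes I_E), \qquad \rho_{AB}^{(\psi)} = (W_A\otimes I_B)\rho_{AB}^{(M)}(W_A^*\otimes I_B).$$

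Next, suppose there is a channel $T$ of the form (\ref{TX}) with $T(\rho_{AE}^{(\psi)})=\rho_{AB}^{(\psi)}$. The key structural feature of this form is that the Kraus operators of $T$ are $I_A\otimes F_j$, so $T$ commutes with conjugation by $W_A\otimes I$:
$$T\bigl((W_A\otimes I_E)\,Y\,(W_A^*\otimes I_E)\bigr) = (W_A\otimes I_B)\,T(Y)\,(W_A^*\otimes I_B).$$
Combining this intertwining property with the two marginal identities above yields
$$(W_A\otimes I_B)\,T(\rho_{AE}^{(M)})\,(W_A^*\otimes I_B) = (W_A\otimes I_B)\,\rho_{AB}^{(M)}\,(W_A^*\otimes I_B).$$

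The last step invokes the invertibility of $W_A$: multiplying on the left by $W_A^{-1}\otimes I_B$ and on the right by $(W_A^*)^{-1}\otimes I_B$ cancels the outer factors and delivers $T(\rho_{AE}^{(M)})=\rho_{AB}^{(M)}$, meaning that the maximally entangled output is $E\to B$ degradable through the very same $T$, contradicting the hypothesis. The main (indeed only) delicate point is precisely this cancellation, which is why the statement requires $W_A$ to be invertible; without invertibility the contrapositive argument collapses, since one can no longer strip the $W_A$'s from both sides. Whether the conclusion can be recovered for singular $W_A$, or whether explicit counterexamples exist in that regime, would require a different approach and is not addressed by this proof.
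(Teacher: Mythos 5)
Your proof is correct and is essentially the paper's own argument: the paper also proceeds by contradiction, reusing the intertwining identity from Theorem~\ref{thm-connection-1} with the roles of $\ket{\psi}_{AB}$ and $\ket{\psi_M}_{AB}$ swapped and $K_A=W_A^{-1}$, which is the same cancellation of the invertible local factor that you perform explicitly. The only cosmetic difference is that you conjugate forward by $W_A$ and then strip it off, while the paper conjugates directly by $W_A^{-1}$.
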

\begin{proof}
We prove by contradiction.
Suppose that for some input state $\ket{\psi}_{AB}$,
the output state 
$U_{ABE} \ket{\psi}_{AB} \ket{0}_E$
is $E \rightarrow B$ degradable.
We repeat the arguments in the proof of Theorem~\ref{thm-connection-1} with $\ket{\psi}_{AB}$ and $\ket{\psi_M}_{AB}$ swapped and with $K_A=W_A^{-1}$.
Then, Eqs.~\eqref{eqn-channel-degradability-1}-\eqref{eqn-channel-degradability-3} follow,  concluding that when the input state is $\ket{\psi_M}_{AB}$, the output state is $E \rightarrow B$ degradable.
This contradicts the assumption and thus proves the theorem.
\end{proof}
}} 

\section{Necessary condition for degradability}
\label{sec-rule-out}

We provide an easily computable method to rule out the degradability of a given state.
It is based on  
the expression of degradability in 
condition (b) of Theorem~\ref{thm-main1} and
the contractivity of quantum channels under the trace distance.

\begin{definition}
The trace norm of a matrix $\sigma \in M_q$ is $\tr |\sigma|=\sum_{j=1}^q \lambda_j$ where $\lambda_j$ are the singular values of $\sigma$.
\end{definition}

\begin{definition}
The trace distance between two matrices $\rho, \sigma \in M_q$ is
$d(\rho,\sigma)=\frac{1}{2} \tr |\rho-\sigma|$.
\end{definition}

Quantum channels are contractive under the trace distance for quantum states, i.e., 
$d(\rho,\sigma) \geq d({\mathcal F} (\rho),{\mathcal F} (\sigma) )$
for any density matrices $\rho$ and $\sigma$ and quantum channel ${\mathcal F}$~\cite{Ruskai:1994:contraction} (see also Theorem~9.2 of \cite{Nielsen2000}).
However, the matrices of concern in
condition (b) of Theorem~\ref{thm-main1}, $R_i R_j^*$ and $S_i S_j^*$, are general matrices and
may not be Hermitian and positive semi-definite. 
Nevertheless, we prove in Appendix~\ref{appendix1} that
quantum channels are contractive under any unitarily invariant norm 
for general matrices, of which the following theorem for the trace norm is a special case.
\begin{theorem}
\label{thm-trace-norm-contractive}
Given a TPCP map (quantum channel) 
${\mathcal F}: M_q \rightarrow M_p$ 
described by Kraus operators 
$F_1, \dots, F_r \in M_{p, q}$ 
with $\sum_{j=1}^r F_j^*F_j = I_q$
acting on matrices $\sigma \in M_q$ (not necessarily quantum states),
$\trn | {\mathcal F} (\sigma) | \leq  \trn | \sigma |$.
\end{theorem}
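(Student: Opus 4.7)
The plan is to prove the trace-norm contractivity via the duality between the trace norm and the operator norm, followed by a contraction bound on the Heisenberg (adjoint) channel. Specifically, for any $\tau \in M_p$ one has
\[
\trn|\tau| = \sup_{Y \in M_p,\ \|Y\|_\infty \le 1} |\trn(Y\tau)|,
\]
where $\|\cdot\|_\infty$ is the operator norm. Setting $\tau = \mathcal{F}(\sigma)$ and using cyclicity of the trace with the adjoint $\mathcal{F}^*(Y) = \sum_{j=1}^r F_j^* Y F_j$ converts the problem into
\[
\trn|\mathcal{F}(\sigma)| = \sup_{\|Y\|_\infty \le 1} |\trn(\mathcal{F}^*(Y)\,\sigma)|,
\]
so the task becomes showing that $\mathcal{F}^*$ is contractive in the operator norm.

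Next I would observe that $\mathcal{F}^*: M_p \to M_q$ is completely positive (because $\mathcal{F}$ is) and unital, since the TPCP condition $\sum_j F_j^*F_j = I_q$ is exactly $\mathcal{F}^*(I_p) = I_q$. For a self-adjoint $Y$ with $\|Y\|_\infty \le 1$ we have $-I_p \le Y \le I_p$, and unital positivity yields $-I_q \le \mathcal{F}^*(Y) \le I_q$, giving $\|\mathcal{F}^*(Y)\|_\infty \le 1$. For a general (non-Hermitian) $Y$ the same bound follows from the Russo--Dye theorem for positive unital maps, which asserts $\|\mathcal{F}^*\| = \|\mathcal{F}^*(I_p)\|_\infty = 1$.

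Combining these with the Hölder-type bound $|\trn(AB)| \le \|A\|_\infty \cdot \trn|B|$ then gives
\[
|\trn(\mathcal{F}^*(Y)\,\sigma)| \le \|\mathcal{F}^*(Y)\|_\infty \cdot \trn|\sigma| \le \|Y\|_\infty \cdot \trn|\sigma| \le \trn|\sigma|,
\]
and taking the supremum over $\|Y\|_\infty \le 1$ recovers $\trn|\mathcal{F}(\sigma)| \le \trn|\sigma|$, as desired.

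The main obstacle in this outline is the non-Hermitian case of the operator-norm contraction for $\mathcal{F}^*$, since a naive positivity argument only handles Hermitian arguments. A clean alternative that avoids Russo--Dye is to invoke Stinespring: write $\mathcal{F}(\sigma) = \trn_E[V\sigma V^*]$ for an isometry $V$ with $V^*V = I_q$. Then (i) $\trn|V\sigma V^*| = \trn|\sigma|$ because the singular values of $V\sigma V^*$ coincide with those of $\sigma$ (from $V^*V = I_q$), and (ii) the partial trace is trace-norm contractive, which itself follows from the same duality argument applied to the simple amplification $Y \mapsto Y \otimes I_E$ that trivially satisfies $\|Y \otimes I_E\|_\infty = \|Y\|_\infty$. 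This Stinespring route also generalizes to all Ky Fan norms, and hence to every unitarily invariant norm via the Ky Fan dominance principle, which is presumably the path taken in Appendix~\ref{appendix1}.
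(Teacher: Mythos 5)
Your proof is correct, but it is not the route the paper takes. The paper's Appendix~\ref{appendix1} argument is a purely matrix-theoretic averaging (pinching) argument: it embeds the Kraus operators as the first block column of a unitary $U\in M_{pr}$, so that $U(A\oplus O)U^*$ has diagonal blocks summing to $B=\mathcal{F}(A)$, then averages over diagonal unitaries built from $r$th roots of unity and over cyclic block permutations to produce $I_r\otimes B$ as an average of $r^2$ unitary conjugates of $A\oplus O$; the triangle inequality gives $\|I_r\otimes B\|\le r\|A\oplus O\|$ for every unitarily invariant norm, and specializing to the trace norm (where $\trn|I_r\otimes B|=r\,\trn|B|$) yields the theorem. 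Your argument instead goes through the duality $\trn|\tau|=\sup_{\|Y\|_\infty\le 1}|\trn(Y\tau)|$ and the operator-norm contractivity of the unital adjoint map $\mathcal{F}^*$; both your main route (Russo--Dye) and your Stinespring alternative are sound, and your guess that the paper uses the Stinespring/Ky Fan route is the only thing that is off. What each buys: the paper's proof is self-contained (only unitary invariance and the triangle inequality, no duality or operator-algebra input) and its intermediate inequality holds for all unitarily invariant norms; yours is shorter and more standard, and the Stinespring variant cleanly splits the problem into an isometry step and partial-trace contractivity. One refinement worth noting: since $\mathcal{F}^*$ here is manifestly completely positive, you can avoid citing Russo--Dye for the non-Hermitian case either via the Kadison--Schwarz inequality $\mathcal{F}^*(Y)^*\mathcal{F}^*(Y)\le \mathcal{F}^*(Y^*Y)\le\|Y\|_\infty^2 I_q$, or by the elementary Cauchy--Schwarz estimate $|v^*\bigl(\sum_j F_j^*YF_j\bigr)u|\le\|Y\|_\infty\bigl(\sum_j\|F_jv\|^2\bigr)^{1/2}\bigl(\sum_j\|F_ju\|^2\bigr)^{1/2}=\|Y\|_\infty$ for unit vectors $u,v$.
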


\begin{corollary}
If
$$
d(R_i R_j^*,R_{i'} R_{j'}^*) <
d(S_i S_j^*,S_{i'} S_{j'}^*)
$$
for some $i,j,i',j'$, then
condition (b) of Theorem~\ref{thm-main1} does not hold.
\end{corollary}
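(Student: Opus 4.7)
The plan is a direct contrapositive argument using the contractivity property established in Theorem~\ref{thm-trace-norm-contractive}. First I would assume for contradiction that condition (b) of Theorem~\ref{thm-main1} does hold, so there exists a TPCP map ${\mathcal F}: M_q \to M_p$ (the Kraus-form channel $Y \mapsto \sum_\ell F_\ell Y F_\ell^*$) with ${\mathcal F}(R_u R_v^*) = S_u S_v^*$ for all $1 \le u,v \le n$. In particular, ${\mathcal F}(R_i R_j^*) = S_i S_j^*$ and ${\mathcal F}(R_{i'} R_{j'}^*) = S_{i'} S_{j'}^*$ for the indices appearing in the hypothesis.

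Next, I would invoke linearity of ${\mathcal F}$ to write
$$
{\mathcal F}(R_i R_j^* - R_{i'} R_{j'}^*) \;=\; S_i S_j^* - S_{i'} S_{j'}^*.
$$
Setting $\sigma = R_i R_j^* - R_{i'} R_{j'}^* \in M_q$ (a general, not necessarily Hermitian, matrix), Theorem~\ref{thm-trace-norm-contractive} applies since it was proved precisely for arbitrary matrices in $M_q$ (this is the crucial point: the ordinary contractivity of quantum channels on quantum states is not enough here because the $R_u R_v^*$ and $S_u S_v^*$ are not in general positive semidefinite). Applying the theorem gives
$$
\trn\bigl| S_i S_j^* - S_{i'} S_{j'}^* \bigr| \;=\; \trn\bigl| {\mathcal F}(\sigma) \bigr| \;\le\; \trn|\sigma| \;=\; \trn\bigl| R_i R_j^* - R_{i'} R_{j'}^* \bigr|.
$$

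Dividing by $2$ and using the definition of the trace distance, this becomes $d(S_i S_j^*, S_{i'} S_{j'}^*) \le d(R_i R_j^*, R_{i'} R_{j'}^*)$, contradicting the hypothesis of the corollary. Hence condition (b) of Theorem~\ref{thm-main1} cannot hold. There is no real obstacle in this proof; the substantive work has already been done in Theorem~\ref{thm-trace-norm-contractive} (proved in Appendix~\ref{appendix1}), which extends contractivity from quantum states to arbitrary matrices. The corollary is simply the contrapositive, packaged as an easily checkable necessary condition since computing trace distances of specific $R_i R_j^*, S_i S_j^*$ pairs is considerably cheaper than searching for a TPCP map.
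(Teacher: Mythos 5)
Your proof is correct and follows essentially the same route as the paper: assume condition (b) holds, use linearity of the channel, apply Theorem~\ref{thm-trace-norm-contractive} to the (generally non-Hermitian) difference $R_i R_j^* - R_{i'} R_{j'}^*$, and derive the contradictory inequality $d(S_i S_j^*, S_{i'} S_{j'}^*) \le d(R_i R_j^*, R_{i'} R_{j'}^*)$. You also correctly identify why the general-matrix version of contractivity is needed rather than the standard state-to-state version.
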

We prove by contradiction.
If condition (b) holds, then
there exists some quantum channel $\mathcal F$ satisfying the transformations,
and
\begin{eqnarray*}
d(S_i S_j^*,S_{i'} S_{j'}^*)
&=&
\frac{1}{2} \tr |
{\mathcal F}(R_i R_j^*)-
{\mathcal F}(R_{i'} R_{j'}^*)
|
\\
&=&
\frac{1}{2} \tr |
{\mathcal F}(R_i R_j^*-
R_{i'} R_{j'}^*)
|
\\
&\leq&
\frac{1}{2} \tr |
R_i R_j^*-
R_{i'} R_{j'}^*
|
\\
&=&
d(R_i R_j^*,R_{i'} R_{j'}^*)
\end{eqnarray*}
where the inequality is due to Theorem~\ref{thm-trace-norm-contractive}.
\qed

Therefore, if we find the distance between the inputs of two transformations to be smaller than the distance between the outputs, the state is not degradable in the sense of Theorem~\ref{thm-main1}.

\begin{example}
{\rm
Consider the output state processed by the qubit depolarizing channel:
\begin{equation}
{\mathcal E}(\rho) = (1-\epsilon) \rho + \frac{\epsilon}{3} Z \rho Z
+ \frac{\epsilon}{3} Y \rho Y
+ \frac{\epsilon}{3} X \rho X
\label{eqn-depolarizing-channel}
\end{equation}
where $\rho \in M_2$ is the input density matrix,
$X=
\pmatrix{
0&1 \cr
1&0}
$,
$Y=
\pmatrix{
0&-1 \cr
1&0}
$, and
$Z=
\pmatrix{
1&0 \cr
0&-1}
$.
Suppose the input state is 
{\textcolor{mycolor}{
$(\ket{00}+\ket{11})_{AB}$ 
}}%
and $\mathcal E$ is applied to system $B$.
The output state purified with system $E$ is
{\textcolor{mycolor}{
\begin{eqnarray}
\ket{\Psi}_{ABE}=
\sqrt{1-\epsilon} \:\: (\ket{00}+\ket{11})_{AB} \ket{0}_E &+&
\sqrt{\frac{\epsilon}{3}} \:\: (\ket{00}-\ket{11})_{AB} \ket{1}_E +
\nonumber \\
\sqrt{\frac{\epsilon}{3}} \:\: (\ket{01}-\ket{10})_{AB} \ket{2}_E &+&
\sqrt{\frac{\epsilon}{3}} \:\: (\ket{01}+\ket{10})_{AB} \ket{3}_E .
\end{eqnarray}
Note that this state is unnormalized, and normalization is not important in the following discussion.}}%
Denote the coefficient for $\ket{ijk}_{ABE}$ by $x_{ijk}$.
Then, following Theorem~\ref{thm-main1},
\begin{eqnarray}
S_0&=&\pmatrix{ 
x_{000} & x_{001} & x_{002} & x_{003} \cr
x_{010} & x_{011} & x_{012} & x_{013} }
=
\pmatrix{ 
\alpha & \beta & 0 & 0 \cr
0 & 0 & \beta & \beta
}
\\
S_1&=&\pmatrix{ 
x_{100} & x_{101} & x_{102} & x_{103} \cr
x_{110} & x_{111} & x_{112} & x_{113} }
=
\pmatrix{ 
0 & 0 & -\beta & \beta \cr
\alpha & -\beta & 0 & 0
}
\end{eqnarray}
where $\alpha=\sqrt{1-\epsilon}$ and $\beta=\sqrt{\frac{\epsilon}{3}}$, and
\begin{eqnarray}
R_i = S_i^t .
\end{eqnarray}
We compute the trace distances as follows:
\begin{eqnarray}
R_0 R_0^* - R_1 R_1^* &=&
\pmatrix{
0 & 2\alpha \beta & 0 & 0 \cr
2\alpha \beta & 0 & 0 & 0 \cr
0 & 0 & 0 & 2\beta^2 \cr
0 & 0 & 2\beta^2 & 0
}
\\
S_0 S_0^* - S_1 S_1^* &=&
\pmatrix{
\alpha^2 - \beta^2 & 0 \cr
0 & -(\alpha^2 - \beta^2)
}
\end{eqnarray}
\begin{eqnarray}
R_0 R_1^* - R_1 R_0^* &=&
\pmatrix{
0 & 0 & -2\alpha \beta & 0 \cr
0 & 0 & 0 & 2\beta^2 \cr
2\alpha \beta & 0 & 0 & 0 \cr
0 & -2\beta^2 & 0 & 0
}
\\
S_0 S_1^* - S_1 S_0^* &=&
\pmatrix{
0 & \alpha^2 - \beta^2 \cr
-(\alpha^2 - \beta^2) & 0
} .
\end{eqnarray}
It can be shown that
$R_0 R_0^* - R_1 R_1^*$ and
$R_0 R_1^* - R_1 R_0^*$
have singular values
$2 \alpha \beta, 2 \alpha \beta, 2\beta^2, 2\beta^2$,
and
$S_0 S_0^* - S_1 S_1^*$ and
$S_0 S_1^* - S_1 S_0^*$
have singular values
$(\alpha+\beta)(\alpha-\beta), (\alpha+\beta)(\alpha-\beta)$.
Here, we assume $\alpha>\beta$.
Thus, 
$d_R \equiv d(R_0 R_0^* , R_1 R_1^*)=d(R_0 R_1^* , R_1 R_0^*)=2\beta(\alpha+\beta)$
and
$d_S \equiv d(S_0 S_0^* , S_1 S_1^*)=d(S_0 S_1^* , S_1 S_0^*)=(\alpha+\beta)(\alpha-\beta)$.

Therefore, 
we have the condition for the input distance being smaller than the output distance:
\begin{equation}
d_R < d_S
\Rightarrow
\epsilon < \frac{1}{4}.
\label{eqn-example-depolarizing-channel-epsilon}
\end{equation}
Under this condition, there does not exist a quantum channel $T_E$ acting on system $E$ such that $T_E(\rho_{AE})=\rho_{AB}$.
Here, $\rho_{AE}=\tr_B (\ket{\Psi}_{ABE}\bra{\Psi})$ and
$\rho_{AB}=\tr_E (\ket{\Psi}_{ABE}\bra{\Psi})$.
By Theorem~\ref{thm-connection-3},
the same conclusion holds for all other Bell states serving as the input state since all Bell states are unitarily transformable to each other.
We can interpret the result in the context of quantum key distribution (QKD)~\cite{Bennett1984,Ekert1991}, in which
two legitimate parties, conventionally named Alice and Bob (they correspond to systems $A$ and $B$ here), want to share a secret key against an eavesdropper Eve (system $E$ here), by exchanging quantum states.
These states may be modified by Eve.
In a typical QKD session, Alice and Bob learn about the quantum states 
by comparing measurement results in various measurement bases (such as $X$, $Y$, or $Z$).
For each basis, we can compute the fraction of measurement mismatches,
which is known as the quantum bit error rate (QBER).
Note that the QKD protocol described here operates in a two-dimensional space, although the presentation of this paper treats arbitrary finite dimensions.
Since the state in Eq.~(\ref{eqn-depolarizing-channel})  is symmetric with respect to  measurements in $X$, $Y$, and $Z$, the QBER for each of them is the same, $2 \epsilon /3$.
(This means that measurements in say the $X$ basis produce an error rate of $2 \epsilon /3$ when the channel input is an $X$ eigenstate.)
Combining with Eq.~\eqref{eqn-example-depolarizing-channel-epsilon}, 
it means that when the QBER is less than $1/6$, Eve is not able to imitate Bob.
Recall that if, on the other hand, Eve is able to imitate Bob, 
no key can be generated using one-way postprocessing~\cite{Bruss:1998:cloning,Nowakowski:2009:testforcapacity,Moroder:2006:upperbound}.
Thus, our result here is consistent with the result that positive key rate is achievable
when the QBER is less than $1/6$
for the six-state protocol~\cite{Renner:2005:securityproof}.
}
\end{example}

\section{Additional remarks and questions}
\label{sec-additional-problems}

Direct application of Theorem~\ref{thm-main2}
yields the following.

\begin{proposition}
Use the notation in Theorem~\ref{thm-main1}.
The following are equivalent.

{\rm (a)} There are TPCP maps $T = I_n\otimes T_1$ and $L = I_n \otimes L_1$
such that $T(X_2) = X_3$ and $L(X_3) = X_2$.

{\rm (b)}
There is a TPCP map sending
$\{R_uR_v^*: 1 \le u, v \le n\}$ to  
$\{S_uS_v^*: 1 \le u, v \le n\}$, and a TPCP map sending   
$\{S_uS_v^*: 1 \le u, v \le n\}$ 
to $\{R_uR_v^*: 1 \le u, v \le n\}$.

{\rm (c)} There are $p\times q$ matrices $F_1, \dots, F_r$ with $\sum_{j=1}^r F_j^*F_j = I_q$
and $k_i \times p$ matrices $W_{i1}, \dots, W_{ir}$ 
such that for all $i, j = 1, \dots, n$,
$$[F_1R_i \cdots F_r R_i] = V_iD_i[W_{i1} \cdots W_{ir}]
\quad \hbox{ and } \quad [W_{i1} \cdots W_{ir}][W_{j1} \cdots W_{jr}]^* = U_i^t\overline{U}_j,$$
and
there are $q\times p$ matrices $\tilde F_1, \dots, \tilde F_s$ with $\sum_{j=1}^s \tilde F_j^*
\tilde F_j = I_p$
and $k_i \times q$ matrices $\tilde W_{i1}, \dots, \tilde W_{is}$ 
such that for all $i, j = 1, \dots, n$,
$$[\tilde F_1R_i^t \cdots \tilde F_s R_i^t] = U_iD_i[\tilde W_{i1} \cdots \tilde W_{is}]
\quad \hbox{ and } \quad [\tilde W_{i1} \cdots \tilde W_{is}][\tilde W_{j1} \cdots \tilde W_{js}]^* 
= V_i^t\overline{V}_j.$$
\end{proposition}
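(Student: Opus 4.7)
The plan is to derive this proposition by applying Theorem~\ref{thm-main1} twice, once to each of the two required directions of degradability, and then simply conjoining the resulting equivalences. Since (a) splits naturally as (a$_T$) the existence of $T = I_n \otimes T_1$ with $T(X_2) = X_3$ and (a$_L$) the existence of $L = I_n \otimes L_1$ with $L(X_3) = X_2$, and similarly (b) and (c) split into two halves, it suffices to handle each half independently.

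For the first half, I would invoke Theorem~\ref{thm-main1} verbatim. Its equivalences (a)$\Leftrightarrow$(b)$\Leftrightarrow$(c) immediately give that (a$_T$) holds iff there is a TPCP map sending $\{R_uR_v^*\}$ to $\{S_uS_v^*\}$ iff the first batch of matrices $F_j \in M_{p,q}$ with $\sum F_j^*F_j = I_q$ and $W_{i\ell} \in M_{k_i,p}$ satisfying the stated identities exists.

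For the second half I would apply Theorem~\ref{thm-main1} to the vector $y \in \IC^n \otimes \IC^q \otimes \IC^p$ obtained from $x$ by swapping subsystems $2$ and $3$. Under this swap the partial traces $X_2$ and $X_3$ interchange, and denoting the analogues of $R_i, S_i$ for $y$ by $\tilde R_i, \tilde S_i$, one has $\tilde R_i = S_i = R_i^t$ and $\tilde S_i = R_i$. Hence if $R_i = U_iD_iV_i^t$ is the SVD from Theorem~\ref{thm-main1}, then $\tilde R_i = V_iD_iU_i^t$, so the roles of $U_i$ and $V_i$ are exchanged while $D_i$ and $k_i$ are preserved. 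Feeding these substitutions into Theorem~\ref{thm-main1}(c) applied to $y$ delivers exactly the second batch of conditions in (c) of the proposition: Kraus operators $\tilde F_j \in M_{q,p}$ with $\sum \tilde F_j^*\tilde F_j = I_p$ and matrices $\tilde W_{i\ell} \in M_{k_i,q}$ satisfying $[\tilde F_1 R_i^t \cdots \tilde F_s R_i^t] = U_iD_i[\tilde W_{i1}\cdots \tilde W_{is}]$ and $[\tilde W_{i1}\cdots ][\tilde W_{j1}\cdots ]^* = V_i^t\overline{V_j}$.

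No new ideas are required; the only thing to watch is the bookkeeping under the $B \leftrightarrow E$ swap, specifically that transposing $S_i$ into the role of $R_i$ swaps $U_i$ with $V_i$ in the SVD but leaves $D_i$ (and hence $k_i$) untouched, which is what produces the $R_i^t$, $U_iD_i$, and $V_i^t\overline{V_j}$ appearing in the second half of (c) where $R_i$, $V_iD_i$, and $U_i^t\overline{U_j}$ had appeared in the first half. Conjoining the two applications of Theorem~\ref{thm-main1} yields (a)$\Leftrightarrow$(b)$\Leftrightarrow$(c).
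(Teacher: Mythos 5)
Your proposal is correct and matches the paper's (essentially unstated) argument: the paper simply remarks that the proposition follows by direct application of the main theorem, and your two applications of Theorem~\ref{thm-main1} --- once to $x$ and once to the vector with subsystems $2$ and $3$ swapped, with the bookkeeping $\tilde R_i = R_i^t$, $\tilde U_i = V_i$, $\tilde V_i = U_i$, $\tilde D_i = D_i$ --- is exactly the intended derivation. (Note the paper's text cites Theorem~\ref{thm-main2}, the rank-one special case, but the form of condition (c) makes clear that Theorem~\ref{thm-main1} is the one actually being invoked, as you do.)
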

\begin{proposition} The following are equivalent.

{\rm (a)} 
There is a TPCP map sending $R_iR_j^*$ to $S_iS_j^*$.

{\rm (b)} There is a TPCP map sending 
 $(\sum c_i R_i)(\sum \tilde c_j R_j)^*$ to $(\sum c_i S_i)(\sum \tilde c_j S_j)^*$
 for any scalars $c_1, \dots, c_n, \tilde c_1, \dots, \tilde c_n$.
 
{\rm (c)} There is a TPCP map sending 
 $(\sum c_i R_i)(\sum  c_j R_j)^*$ to $(\sum c_i S_i)(\sum c_j S_j)^*$
 for any scalars $c_1, \dots, c_n$.
 
\end{proposition}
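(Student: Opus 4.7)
The plan is to verify the cycle of implications $(a) \Rightarrow (b) \Rightarrow (c) \Rightarrow (a)$.

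For $(a) \Rightarrow (b)$, I would exploit the linearity of any TPCP map. Expanding $(\sum_i c_i R_i)(\sum_j \tilde c_j R_j)^* = \sum_{i,j} c_i \overline{\tilde c_j}\, R_i R_j^*$, a map $T$ that sends each $R_iR_j^*$ to $S_iS_j^*$ must, by linearity, send this combination to $\sum_{i,j} c_i \overline{\tilde c_j}\, S_i S_j^* = (\sum_i c_i S_i)(\sum_j \tilde c_j S_j)^*$. The same map $T$ therefore witnesses (b). Specializing $\tilde c_j = c_j$ then yields $(b) \Rightarrow (c)$ trivially.

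The nontrivial direction is $(c) \Rightarrow (a)$. Let $T$ be the TPCP map provided by (c). Expanding and using linearity of $T$, the hypothesis becomes
$$
\sum_{i,j=1}^n c_i \overline{c_j}\, \bigl(T(R_i R_j^*) - S_i S_j^*\bigr) = 0
$$
for every $(c_1,\dots,c_n) \in \IC^n$. Writing $M_{ij} = T(R_i R_j^*) - S_i S_j^*$, I would argue that the vanishing of the Hermitian-type form $c \mapsto \sum_{i,j} c_i \overline{c_j}\, M_{ij}$ on all of $\IC^n$ forces each $M_{ij} = 0$. This is a standard polarization argument: choosing $c = e_k$ gives $M_{kk} = 0$; then choosing $c = e_k + e_\ell$ and $c = e_k + i e_\ell$ for $k \neq \ell$, and subtracting the already-known diagonal contributions, produces two independent linear equations in $M_{k\ell}$ and $M_{\ell k}$ whose unique solution is $M_{k\ell} = M_{\ell k} = 0$.

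The only ``obstacle'' is conceptual rather than technical: one must observe that although (c) appears to constrain only the ``diagonal'' evaluations of a sesquilinear object, the linearity of $T$ upgrades this into a constraint on the full sesquilinear form on $\IC^n$, so the complex polarization identity recovers every off-diagonal entry. With this observation, the proof collapses to the four-term polarization calculation above and incurs no further difficulty.
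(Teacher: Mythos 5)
Your proof is correct and is essentially the argument the paper intends: the paper states this proposition without an explicit proof, but the linearity-plus-complex-polarization idea you use is exactly what underlies its earlier Remark after Proposition~\ref{proposition-low-dim-3} (checking only $R_uR_u^*$, $(R_u+R_v)(R_u+R_v)^*$ and $(R_u+iR_v)(R_u+iR_v)^*$ suffices to recover all cross terms $R_uR_v^*$). Your polarization computation for $(c)\Rightarrow(a)$ closes the cycle correctly, so nothing is missing.
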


By the above proposition, we can focus on a maximal linearly independent subset set 
$\{R_1, \dots, R_m\}$ and check whether there is a TPCP map sending 
$R_iR_j^*$ to $S_iS_j^*$ for matrices $R_i, R_j$ in this set.
Note, however, that the above propositions are not very practical and it is desirable to have some more practical conditions.

\medskip\noindent
{\bf Problem}
Can we extend Corollary~\ref{cor-main3}
and determine the condition for the existence of 
TPCP maps $T = I_n \otimes T_1$ and $L = I_n\otimes L_1$ 
such that $T(X_2) = X_3$ and $L(X_3) = X_2$?

\section{Concluding remarks}
\label{sec-conclusion}

{\textcolor{mycolor2}{
In this paper, 
we introduced the notion of state degradability.
The joint state of Alice and Eve is degradable if Eve's system can be processed by a quantum channel to produce a joint state that is the same as the joint state of Alice and Bob.
We proved necessary and sufficient conditions for 
state degradability.
}}%
The conditions are in general difficult to check, but we also provide an easily computable method to rule out degradability.
This method is based on the fact that the trace distance between two states can only become smaller under the action of a quantum channel.
{\textcolor{mycolor2}{
One application of state degradability is that it can be used to test channel degradability.
Analysis of the channel output state of the maximally entangled input state gives information about the degradability of the channel.
}}%
Another application of state degradability is in the analysis of QKD, in which no secret key can be generated by one-way postprocessing when the joint state between Alice and Eve can be degraded to a joint state between Alice and Bob.
For future work, we hope to investigate more connections between degradability and other quantum information processing tasks, and extend our result to the case where Alice, Bob, and Eve share a mixed quantum state.

\appendix

\section{Proof of Theorem~\ref{thm-trace-norm-contractive}}
\label{appendix1}

A norm is unitarily invariant if $\|X\| = \|UXV\|$ for any unitary $U,V$.
Note that the trace norm is one such norm.

\begin{proposition}  Suppose $B \in M_p$ and $A\in M_q$ such that
$B = \sum_{j=1}^r F_j A F_j^*$ with $\sum F_j^*F_j = I_q.$
Then,
$\|I_r \otimes B\| \le r \|A \oplus O\|$ for any 
   unitarily invariant norm $\|\cdot\|$ on $M_{pr}$.
\end{proposition}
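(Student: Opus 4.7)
The plan is to construct a Stinespring-type dilation of the completely positive map $A \mapsto B = \sum_j F_j A F_j^*$ and then extract $I_r \otimes B$ from the dilated operator by two independent group-averaging steps---one that kills off-diagonal blocks and one that duplicates the resulting single diagonal block across all $r$ slots. The factor $r$ on the right-hand side should arise naturally from invoking the triangle inequality at the second averaging step.

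First, I would build the isometry $V \colon \IC^q \to \IC^r \otimes \IC^p \cong \IC^{pr}$ given by $V = \sum_{j=1}^r e_j \otimes F_j$, with $e_1,\dots,e_r$ the standard basis of $\IC^r$. The hypothesis $\sum_j F_j^*F_j = I_q$ is exactly $V^*V = I_q$, so $V$ is an isometry and $VAV^*$ has the same nonzero singular values as $A$. Hence $\|VAV^*\| = \|A \oplus O\|$ for every unitarily invariant norm on $M_{pr}$, where $O$ denotes the $(pr-q)\times(pr-q)$ zero matrix. Viewing $VAV^*$ as an $r\times r$ block matrix with $p\times p$ blocks (induced by the $\IC^r$ factor), direct calculation yields $(VAV^*)_{jk} = F_j A F_k^*$.

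Second, I would eliminate the off-diagonal blocks by averaging over the cyclic phase group. Let $\omega = e^{2\pi i/r}$ and $D_m = \diag(\omega^m,\omega^{2m},\dots,\omega^{rm}) \in M_r$. Conjugating $VAV^*$ by the unitary $D_m \otimes I_p$ multiplies the $(j,k)$-block by $\omega^{m(j-k)}$, so the character-sum identity $\sum_{m=1}^r \omega^{m(j-k)} = r\delta_{jk}$ gives
$$\sum_{m=1}^r (D_m \otimes I_p)(VAV^*)(D_m \otimes I_p)^* \;=\; r\,G, \qquad G \;:=\; \sum_{j=1}^r e_j e_j^* \otimes (F_j A F_j^*).$$
Triangle inequality and unitary invariance then give $\|G\| \le \|VAV^*\| = \|A \oplus O\|$. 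Next, I would duplicate $B$ across all $r$ diagonal slots by averaging over the cyclic shifts $P_k\in M_r$ defined by $e_j \mapsto e_{j+k \bmod r}$; a direct computation yields
$$\sum_{k=1}^r (P_k \otimes I_p)\,G\,(P_k \otimes I_p)^* \;=\; \sum_{\ell=1}^r e_\ell e_\ell^* \otimes \Bigl(\sum_{j=1}^r F_j A F_j^*\Bigr) \;=\; I_r \otimes B,$$
so the triangle inequality yields $\|I_r \otimes B\| \le r\|G\| \le r\|A \oplus O\|$, which is the claim.

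There is no serious technical obstacle once the dilation is written down: each averaging step uses only unitary invariance and the triangle inequality, and the symmetric-gauge characterization of unitarily invariant norms makes $\|VAV^*\| = \|A \oplus O\|$ automatic. The only conceptual point---and the reason for the factor $r$---is that no single averaging can simultaneously diagonalize the block structure and duplicate the diagonal entry, so two distinct symmetry groups (phase characters and cyclic shifts) must be used in succession, and each use of the triangle inequality costs a potential factor, the second of which does not cancel.
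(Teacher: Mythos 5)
Your proof is correct and is essentially the paper's own argument: both dilate the map via the isometry/unitary extension built from the Kraus operators (so that $\|VAV^*\|=\|U(A\oplus O)U^*\|=\|A\oplus O\|$), then average over diagonal phase unitaries to kill the off-diagonal blocks $F_jAF_k^*$ and over cyclic shifts to replicate $B$ on the diagonal, with the triangle inequality supplying the factor $r$. The only difference is presentational --- you perform the two averagings sequentially through the intermediate block-diagonal matrix $G$, while the paper folds them into a single double sum at the end.
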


\it Proof. \rm   Let $U = (U_{ij})_{1 \le i,j \le r}$ 
be unitary such that $U_{j1} = F_j$ for $j = 1, \dots, r$.
Then $U(A \oplus O)U^* = (A_{ij})_{1 \le i,j \le r}$ such that 
    $A_{11} + ... + A_{rr} = B$.
   Take $P = \diag(1,w, .., w^{r-1}) \otimes I_p$ with $w = e^{i2\pi/r}$.
   Then  ${r}^{-1} \sum_{1 \le \ell \le r} P^\ell(A_{ij})(P^\ell)^* 
   = A_{11} \oplus \cdots \oplus A_{rr}$.
Now take $Q = (E_{12} + \cdots E_{r-1,r} + E_{r,1})\otimes I_p$.
Then  $\sum_{1 \le j \le r} Q^j(A_{11} 
\oplus \cdots \oplus A_{rr})(Q^j)^* = I_r \otimes B$.
Thus, using the triangle inequality,
$$\| I_r \otimes B\| =
\left\| \frac{1}{r}\sum_{\ell,k=1}^r Q^k P^\ell U (A\oplus O)(Q^k P^\ell U)^* \right\|
\le
\frac{1}{r}\sum_{\ell,k=1}^r \| A\oplus O \|
=r\| A\oplus O \| .
$$
\qed

To prove Theorem~\ref{thm-trace-norm-contractive}, we just 
take $\|\cdot\|$ to be the trace norm to get
$$
\| B \| = \frac{1}{r} \|I_r \otimes B\|  \le  \| A\oplus O \|=  \| A \|.
$$

\section*{Acknowledgments}%
We thank Zejun Huang and Edward Poon for enlightening discussion.

This research evolved in a faculty seminar on quantum information science
at the University of Hong Kong in the spring of 2012 coordinated by
Chau and Li. The support of the Departments of Physics and Mathematics 
of the University of Hong Kong is greatly appreciated.

Chau and Fung were partially supported by the Hong Kong RGC grant
No. 700712P. 
Sze was partially supported by the Hong Kong RGC grant PolyU 502512.
Li
was supported
by a USA NSF grant and a Hong Kong RGC grant; 
he was a visiting professor of the  University of Hong Kong in the
spring of 2012,  an honorary professor of Taiyuan University of Technology
(100 Talent Program scholar), and an honorary  professor of  Shanghai
University.

\bibliographystyle{unsrt}

\bibliography{paperdb}

\end{document}